\footnotesize\color{gray},      
\theoremstyle{definition}
\newtheorem{myexample}{Example}
\newtheorem{thm}{Theorem}
\begin{document}

\title[Which Part of the Heap is Useful?]{Which Part of the Heap is Useful?\\
Improving Heap Liveness Analysis}

\author{Vini Kanvar}
\authornote{Partially supported by a TCS fellowship}          
\affiliation{
  \institution{Indian Institute of Technology Bombay}            
  \country{India}
 }
\email{vkanvar@gmail.com}          

\author{Uday P. Khedker}
\affiliation{
  \institution{Indian Institute of Technology Bombay}           
  \country{India}
  }
\email{uday@cse.iitb.ac.in}         


\begin{abstract}

With the growing sizes of data structures allocated in heap, understanding the
actual use of heap memory is critically important for minimizing cache misses
and reclaiming unused memory.  A static analysis aimed at this is difficult
because the heap locations are unnamed. Using allocation sites to name them
creates very few distinctions making it difficult to identify 
allocated heap locations that are  not used. Heap liveness analysis
using access graphs solves this problem by 
\begin{inparaenum}[\em (a)]
\item using  a storeless model of heap memory by naming the locations with access paths,  and
\item representing the unbounded sets of access paths (which are regular languages) as finite automata.
\end{inparaenum}

We improve the scalability and efficiency of heap liveness analysis, and reduce the
amount of computed heap liveness information by using {\em deterministic
automata} and by {\em minimizing the inclusion of aliased access paths} in the
language. Practically, our field-, flow-, context-sensitive liveness analysis
on SPEC CPU2006 benchmarks scales to 36 kLoC (existing analysis scales to 10.5
kLoC) and improves efficiency even up to 99\%. For some of the benchmarks, our
technique shows multifold reduction in the computed liveness information,
ranging from 2 to 100 times (in terms of the number of live access paths),
without compromising on soundness. 
\end{abstract}

\begin{CCSXML}
<ccs2012>
<concept>
<concept_id>10011007.10011006.10011008</concept_id>
<concept_desc>Software and its engineering~General programming languages</concept_desc>
<concept_significance>500</concept_significance>
</concept>
<concept>
<concept_id>10003456.10003457.10003521.10003525</concept_id>
<concept_desc>Social and professional topics~History of programming languages</concept_desc>
<concept_significance>300</concept_significance>
</concept>
</ccs2012>
\end{CCSXML}

\ccsdesc[500]{Software and its engineering~General programming languages}
\ccsdesc[300]{Social and professional topics~History of programming languages}

\keywords{liveness, heap, pointer, garbage collection}  

\maketitle

\input{definitions.tex}
\input{figures.tex}

\section{Introduction}
%
Heap memory management is non-trivial and error-prone. Poorly written programs
tend to run out of memory leading to failures. Heap liveness analysis
identifies usages of heap locations.  It can be used for identifying possible
memory leaks, suggesting better use of stack in place of heap, identifying heap
memory locations and links that can be reclaimed when they are not required any
longer in a program, improving cache behaviour of programs etc. 


\begin{figure}
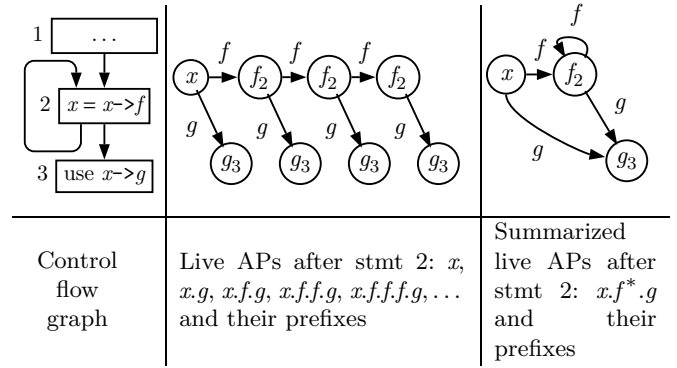

\begin{tabular}{@{}m{1.85cm}|@{}m{4cm}|@{}m{2.3cm}}
\introcfg & \Gfull & \Gcompact \\ \hline
\small
\begin{tabular}{@{\quad}c} \small Control \\ \small flow \\ \small graph \end{tabular}
&
\multicolumn{1}{m{3.8cm}|}{\small Live APs
after stmt 2: $\X,$ $\X.\g,$ $\X.\f.\g,$ $\X.\f.\f.\g,$
$\X.\f.\f.\f.\g, \dots$ and their prefixes}
&
\multicolumn{1}{m{2.15cm}}{\small 
Summarized live APs after stmt 2:
$\X.(\f\,)^*.\g$ and their prefixes} \\
\end{tabular}

\caption{\small Summarization of infinite number of live
access paths (APs) in the existing abstraction. Each node's field label
is same as its in-edge label.}
\label{fig:summ-simple}
\end{figure}

Heap liveness information can be represented using a store-based
model~\cite{kk16,kanvar2017s} of memory by constructing a flow-sensitive {\em live
points-to graph}~\cite{amss06}.  At a given program point $\q$, its nodes
denote concrete memory locations and edges denote links between the
locations such that the links are likely to be used before being re-defined
along the control flow paths starting at $\q$.  For this model, a finite
representation of live memory can be created by abstracting concrete heap
locations using allocation-sites~\cite{cwz90}.  However, this
is a very coarse representation because even large programs have a small number
of allocation-sites (see our measurements for SPEC CPU2006 benchmarks in
Figure~\ref{fig:bench}).  This makes very few distinctions
between heap locations making it difficult to identify allocated heap locations
that are not used.

A more precise representation of live heap can be constructed when it is viewed
under a storeless model~\cite{kk16,kanvar2017s} by naming locations in terms of {\em live
access paths}~\cite{ksk07} at each program point.  An access path is a variable
in $\V$ followed by a sequence of fields in $\F$. Thus, $\V \times \F^*$
represents the universal set of access paths. Variables and fields denote memory links. An
access path is marked as live at program point $\q$, if memory link(s)
corresponding to its last field are likely to  be used before being defined
along some control flow path starting at $\q$.  For example, access path $\Y$
in this abstraction indicates that variable $\Y$ is dereferenced in the program,
and access path $\Y.\f.\g$ suggests that the last field $\g$ is
dereferenced from the location pointed to by field $\f$ of the location pointed to by
variable $\Y$.

Due to loops and recursion, the set of access paths may be infinite
and lengths of access paths may be unbounded. This warrants 
summarization.  We ignore the adhoc summarization techniques such as
$k$-limiting~\cite{kk16} (Section~\ref{sec:rel}) and focus on \emph{use-site} based
summarization~\cite{ksk07}: every use of a field name $\f \in \F$ is labelled with a
statement number $\s \in \SS$; the labelled fields and their statement
numbers, represented as $\f_\s$, are treated as unique across the program across all access
paths.  This is illustrated below.

\begin{myexample}
Due to the presence of a loop in the program in Figure~\ref{fig:summ-simple},
field $\f$ appears an unbounded number of times in an access path rooted at
variable \X; there is one occurrence of \g after every occurrence of \f.  All
these are represented as a graph whose nodes are labeled with fields and their
use-sites, $\f_2$ and $\g_3$. Nodes with the same label are merged to obtain a
finite automaton, which precisely represents that the access paths denoted by
regular expression $\X.(\f\,)^*.\g$ are live after statement 2.
\hfill $\Box$
\end{myexample}

This abstraction has the following two limitations:
\begin{itemize}
\item In some cases, the resulting graph is imprecise in that it admits spurious access
      paths thereby over-approximating heap liveness.
\item In some cases, the resulting graph is larger than the graph required
      to admit the same set of access paths.
\end{itemize}

Section~\ref{sec:motiv} explains these limitations and our key ideas to
overcome them.  Sections~\ref{sec:abstract} and~\ref{sec:formulations}
formalize our ideas.  Section~\ref{sec:inter}  describes how we lift our
analysis to interprocedural level.  Section~\ref{sec:impl} describes our
implementations and measurements.  Section~\ref{sec:rel} presents the related
work.  Section~\ref{sec:concl} concludes the paper.
Appendix~\ref{sec:sound-live} proves soundness of the proposed heap liveness
analysis. Appendix~\ref{sec:repr} identifies features in programs that show
precision benefit (in terms of garbage collection of locations).

\section{Our Key Ideas and Contributions}
\label{sec:motiv}

This section explains the limitations of the
existing technique
and how it can be improved.  Let $\SS$ denote the set of program statements (or
sites). $\Q  = \{\Inn, \Outn\} \times \SS$ is a set of program points. For 
statement $\s \in \SS$, $\Inn_\s$
and $\Outn_\s$ denote entry of $\s$ (i.e. program point immediately before
$\s$) and exit of $\s$ (i.e. program point immediately after
$\s$). For simplicity, we will assume that only heap
locations have fields and the variables are not addressable. These assumptions
are relaxed in Section~\ref{sec:addr}.

\begin{figure}
\begin{tabular}{@{}m{2.2cm}m{6.4cm}}
\Gsumm &
\begin{tabular}{@{}c@{}}
	\begin{tabular}{@{}c@{}}\Gexec \\ \small Possible execution snapshot at $\Outn_1$ or $\Inn_2$ \end{tabular} \\
	\begin{tabular}{@{}c@{}}\Gexecone \\ \small Possible execution snapshot at $\Outn_3$ or $\Inn_4$ \end{tabular} \\
	\begin{tabular}{@{}c@{}}\Gexectwo \\ \small Possible execution snapshot at $\Outn_4$ \end{tabular} \\
\end{tabular}
\end{tabular}
\small
\begin{tabular}{|c|c|c|c|}
\hline
\cellcolor{lightgray}
Point &\cellcolor{lightgray} Existing Non- & \cellcolor{lightgray}Proposed \\
	\cellcolor{lightgray}& \cellcolor{lightgray}Deterministic Graph & \cellcolor{lightgray}Deterministic Graph \\ \hline \hline
$\Inn_4$ 
	& \begin{tabular}{@{}c@{}}\Goldthree \\ $\equiv$ \\ \Goldthreefinal \end{tabular} & \Gnewthree \\ \hline
$\Inn_2$
	& \Goldthreefinal & \Gnewone \\ \hline
\end{tabular}
\caption{\small Example. Execution snapshots and fixpoint computations of
backward existing and proposed liveness analyses are shown. Each live node
represents live access path(s) reaching it. Each node's field label
is same as its in-edge label. For simplicity, aliases of the
access paths are not included. Step by step working is shown
in Figure~\ref{fig:summ-steps}}
\label{fig:summ}
\end{figure}
\begin{figure}
\small
\begin{tabular}{@{}m{3.2cm}@{}m{5.5cm}}
\Galias & 
\begin{tabular}{@{}c@{}}
	\Gmemone \\
	\small Possible execution snapshot at $\Outn_7$ \\
	\small for the left control flow path \\ \hline
	\Gmemthree \\[-2ex]
	\Gmemfour \\
	\Gmemfive \\
	\small Possible execution snapshot at $\Outn_7$ \\
	\small for the right control flow path\\
\end{tabular}
\end{tabular}
\small
\begin{tabular}{|l|l|l|l|}
\hline 
\cellcolor{lightgray}
Point & \cellcolor{lightgray}Existing Greedy & \multicolumn{2}{l|}{\cellcolor{lightgray}Minimal Liveness Analysis} \\
\cline{3-4}
 \cellcolor{lightgray}& \cellcolor{lightgray}Liveness Analysis & \cellcolor{lightgray}Phase 1 & \cellcolor{lightgray}Phase 2 \\ \hline \hline
$\Outn_7$ & $\emptyset$
	& $\emptyset$
	& $\emptyset$ \\ \hline
$\Inn_6$ & 
	\begin{tabular}{@{}l@{}} $\{\Y.\f.\g,\Y.\f,\Y,$ \\ $\X.\f.\g,\X.\f,$ \\ $\W.\f.\g,\W.\f,$ \\ $\Z.\g\}$ \end{tabular}
	& \begin{tabular}{@{}l@{}} $\{\Y.\f.\g,\Y.\f,\Y\}$ \end{tabular}
	& \begin{tabular}{@{}l@{}} $\{\Y.\f.\g,\Y.\f,\Y,$ \\ $\X.\f.\g,\X.\f,$ \\ $\Z.\g\}$ \end{tabular}
	\\ \hline
$\Outn_4$ & 
	\begin{tabular}{@{}l@{}} $\{\Y.\f.\g,\Y.\f,\Y,$ \\ $\X.\f.\g,\X.\f,$ \\ $\W.\f.\g,\W.\f,$ \\ $\Z.\g\}$ \end{tabular}
	& \begin{tabular}{@{}l@{}} $\{\Y.\f.\g,\Y.\f,\Y\}$ \end{tabular}
	&\begin{tabular}{@{}l@{}} $\{\Y.\f.\g,\Y.\f,\Y\}$ \end{tabular}
	\\ \hline
$\Outn_3$ & 
	\begin{tabular}{@{}l@{}} $\{\Y.\f.\g,\Y.\f,\Y,$ \\ $\X.\f.\g,\X.\f,$ \\ $\W.\f.\g,\W.\f,$ \\ $\Z.\g\}$ \end{tabular}
	& \begin{tabular}{@{}l@{}} $\{\Y.\f.\g,\Y.\f,\Y\}$ \end{tabular}
	& \begin{tabular}{@{}l@{}} $\{\Y.\f.\g,\Y.\f,\Y,$ \\ $\X.\f.\g,\X.\f,$ \\ $\Z.\g\}$ \end{tabular}
	\\ \hline
$\Inn_3$ & 
	\begin{tabular}{@{}l@{}} $\{\Y,\X,\Z.\g,\Z,$  \\ $\W.\f.\g,\W.\f\}$ \end{tabular}
	& \begin{tabular}{@{}l@{}} $\{\Y,\X,\Z.\g,\Z\}$ \end{tabular}
	& \begin{tabular}{@{}l@{}} $\{\Y,\X,\Z.\g,\Z\}$ \end{tabular}
	\\ \hline
\end{tabular}
\caption{\small Example. A live access path represents that the memory link
corresponding to its last field is live.}
\label{fig:link}
\end{figure}
 

The existing representation~\cite{ksk07} of live access paths uses a
non-deterministic automaton, which admits multiple transition sequences for a
given access path.  To represent each live access path by only one sequence of
transitions, we propose to use a deterministic graph.  Note that we do not
merely create a deterministic version of the final non-deterministic graph but
construct a deterministic graph in each step.  This requires us to store {\em
sets of use-sites} rather than a single use-site per field name. The example
below illustrates that using sets of use-sites brings precision. 

\begin{myexample}
Assume that the program in Figure~\ref{fig:summ} allocates heap locations in
statement 1.  At $\Inn_2$, field $\f$ of location $l_1$ is live because it will
get used in statement 2.  The rest of the links are dead because they are made
unreachable in statement 3.  The existing non-deterministic liveness graph in the row for
$\Inn_2$, denotes that the access paths $\X.(\f\,)^+.\f$
 and its prefixes are live.  This is spurious. Our proposed deterministic
liveness graph in the row for $\Inn_2$ precisely shows that only $\X.\f$ and its
prefixes are live. 
\hfill $\Box$
\end{myexample}

For soundness, the existing technique includes aliases\footnote{The
existing technique~\cite{ksk07} includes alias closure. This introduces further
over-approximation because may aliases are not transitive but the closure
computation treats them as transitive.} of live access paths during backward
liveness analysis. We call this approach {\em greedy liveness analysis} because
it greedily includes all the aliases 
and propagates these in the backward analysis, irrespective of
their need in the analysis of a statement.  This
backward propagation of live aliased access paths 
leads to imprecision. We propose to perform a two-phase analysis called {\em
minimal liveness analysis}. Phase 1 uses aliases of live access paths minimally
and phase 2 generates all aliases of live access paths without propagating them
backwards in the control flow path. Below we illustrate how our
minimal liveness analysis is more precise than greedy liveness
analysis.

\begin{myexample}
In the program in Figure~\ref{fig:link}, statements 6 and 7 are equivalent to
``$\text{use }\Y \texttt{->} \f \texttt{->} \g$''. Therefore, using the snapshot of the
right control flow path, we see that link $\f$ of location $l_{9}$ and link
$\g$ of location $l_{10}$ are dead at $\Outn_4$ as they are not used in
statements 5, 6, 7. 

Existing analysis marks access path $\Y.\f.\g$ and its prefixes as live at
$\Inn_6$.  Since $\X$ and $\Y$ are aliased, the existing greedy liveness
analysis marks the aliases i.e. $\X.\f.\g$ and $\X.\f$ also as live at $\Inn_6$.
These access paths are propagated backwards along the two control flow paths
thereby imprecisely marking them as live at $\Outn_4$ also.

In phase 1 of our proposed analysis, access paths $\Y.\f.\g$ and its prefixes
are marked as live at $\Inn_6$. However, their aliases, viz. $\X.\f.\g$ and
$\X.\f$ are not marked as live. Thus, when the live access paths
are propagated backwards along the two control flow paths in phase 1, no
spurious live access path gets propagated to $\Outn_4$.  Note that aliases
are minimally used when statement 3 is analysed in order to treat LHS $\X \texttt{->}
\f$ of the statement as $\Y \texttt{->} \f$.  Phase 2 of our proposed analysis generates 
aliases $\X.\f.\g$ and $\X.\f$ of live access paths at $\Inn_6$. These are not
propagated backwards in phase 2. At $\Outn_4$, since $\Y.\f.\g$ and its prefixes do not
have any alias, no spurious liveness is generated at $\Outn_4$.
\hfill $\Box$
\end{myexample}

Our main contributions are:
\begin{inparaenum}[\em (i)]
\item showing how heap liveness information summarized with use-sites can be
represented as a deterministic automaton 
(Section~\ref{sec:abstract}), \item showing how inclusion of aliases of live
access paths can be minimized (Section~\ref{sec:formulations}),
\item showing which heap liveness information can be bypassed from the callees
and which can be memoized in the callee (Section~\ref{sec:inter}).
\end{inparaenum}
These improve scalability and efficiency of the analysis, and reduce the amount
of computed liveness information of SPEC
CPU2006.





%
%
%
%

%
%
%
%
%
%

\section{Representations for Heap Liveness}
\label{sec:abstract}
A set of live access paths in the memory graph
forms a regular language, represented as a finite automaton. 
Here, we describe two representations for storing heap
live access paths and compare their precision using examples explained
intuitively. Section~\ref{sec:formulations} presents the formal
details of computation of the representations.

To avoid handling of statements with multiple occurrences of the same field
name, we normalize our statements such that there is at most one field
name in each statement. For example, statement ``$\text{use } \Y \texttt{->}\f
\texttt{->}\f$'' is normalized to $\t=\Y \texttt{->} \f; \text{use } \t
\texttt{->} \f$.

\subsection{Existing Non-Deterministic Graph}
\label{sec:conv-abstr}
Conventionally~\cite{ksk07}, the set of live access paths at program point $\q
\in \Q$
is represented by a non-deterministic graph $\widehat{\G}_\q = \langle
\hat{\V}_\q \cup \hat{\N}_\q, \hat{\E}_\q \rangle$. Here node set $\hat{\V}_\q
\cup \hat{\N}_\q$ is a discriminated union of $\hat{\V}_\q \subseteq \V$, which
is a set of variables, and $\hat{\N}_\q$, which is a set of field nodes,
uniquely identified by a field and its use-site i.e., $\hat{\N}_\q
\subseteq \F \times \SS$.  In other words, we identify each node with $\f_\s$
where $\f \in \F$ and $\s \in \SS$.

Field nodes represented by the same $\f_\s$ across access paths, or
within an access path are merged at a
program point.  
The intuition behind this merging is that fields with a
common use-site are likely to be followed by the same sequences of fields.
Hence their different occurrences across access paths, or within an access path, can
be treated alike.
In other words, access path construction can be assumed to reach the same state
whenever a field node represented by a use-site is included in an access path. Hence,
${\hat{\V}_\q}$ and $\hat{\N}_\q$ represent states of the liveness graph (or automaton),
which determines if access paths are accepted as live. 
For simplicity, we 
assume that all nodes
in $\hat{\V}_\q \cup \hat{\N}_\q$ are accepting states. Set $\hat{\V}_\q$ is a set of initial
states of the automaton reachable by the variables in $\hat{\V}_\q$.

Edge set $\hat{\E}_\q = \hat{\E\V}_\q \cup
\hat{\E\F}_\q$ is a discriminated union of
\begin{itemize}
\item $\hat{\E\V}_\q\subseteq\hat{\V}_\q \times \hat{\N}_\q$ which is a set of edges
from variables in $\hat{\V}_\q$
\item $\hat{\E\F}_\q \subseteq \hat{\N}_\q \times \F \times \hat{\N}_\q$ which is a
set of edges from nodes in $\hat{\N}_\q$ labeled with fields in $\F$.
\end{itemize}


Below we illustrate how the existing abstraction may generate spurious
cycles in the liveness graph.
\begin{myexample}
\label{ex:summ-conv-2}
At $\Inn_4$ of the program in Figure~\ref{fig:summ}, links $\f$ of only
locations $l_1$ and $l_2$ are live. Here the first link will be used by
statement 4, and the second link will possibly be used by statements 2 and 4 of
next loop iteration.  Thus, only access path $\X.\f.\f$ and its
prefixes are live at $\Inn_4$. 

Observe that, even though both $\f$ of $l_1$ and $\f$ of $l_2$ will
used by statement 4, $\f$ of $l_2$ will additionally be used by statement 2
followed by killing of the field after $\f$ of $l_2$ via variable $\t$ in
statement 3.  Basically, the two links corresponding to field $\f$ go through different
changes and are followed by different sequences of fields because they are used
by different statements. In spite of this, the existing technique merges the
two corresponding field nodes identified as $\f_4$ in its non-deterministic
representation thereby creating a spurious cycle around the field node
$\f_4$ (as
shown in the row corresponding to $\Inn_4$).  This spuriously indicates that
infinite number of fields are live at $\Inn_4$.
\hfill $\Box$
\end{myexample}

The example below illustrates how the existing abstraction causes imprecise
merging of access paths in a program even without loops.

\begin{myexample}
\label{ex:merge}
In the program in Figure~\ref{fig:merge}, statements 5 and 6 are equivalent to
``$\text{use }\X \texttt{->} \f \texttt{->} \g$''. Thus, at $\Inn_5$, access
path $\X.\f.\g$ and its prefixes are live. Its existing abstraction is shown
in the row corresponding to $\Inn_5$ where nodes pointed to by $\f$ and $\g$ are
identified as $\f_5$ and $\g_6$. Statements 3 and 4 are
equivalent to ``$\X \texttt{->} \f \texttt{->} \g = \texttt{new}$''. Thus, at
$\Inn_3$, $\X.\f.\g$ is dead. Since $\X.\f$ and its prefixes are live
at $\Outn_2$, access path $\X.\h.\f$ and its prefixes are live at $\Inn_2$.
Overall, at $\Outn_1$, $\X.\f.\g$ from the left control flow path and $\X.\h.\f$
from the right control flow path are live.

Its existing abstraction is shown in the row corresponding to $\Outn_1$.
Here node pointed to by $\h$ is identified as $\h_2$, node
pointed to by $\f$ along $\X.\f.\g$ is identified as $\f_3$, and nodes
pointed to by $\f$ along $\X.\h.\f$ are identified as $\f_3$ and $\f_5$ in
the non-deterministic graph because at $\Outn_1$, link $\f$ of $l_1$ (i.e. along
$\X.\f.\g$) will be used by statement 5 and link $\f$ of $l_5$ (i.e. along
$\X.\h.\f$) will be used by both statements 3 and 5. Here statement 3 leads to
killing of field after $\f$ of $l_5$. Basically, $\f$ of $l_1$ and $l_2$
go through different changes and are followed by different sequences of fields
since they are used by different statements.  In spite of this, the
existing technique merges the corresponding two field nodes $\f_5$ (along
$\X.\h.\f$ and $\X.\f.\g$) thereby creating a spurious
access path $\X.\h.\f.\g$ at $\Outn_1$.
\hfill $\Box$
\end{myexample}

Examples~\ref{ex:summ-conv-2} and~\ref{ex:merge} show that applying use-sites
may lead to spurious cycles and spurious access paths. This happens because the
merged nodes do not necessarily have the same sequences of fields following them
even though their use-site is identical. The non-deterministic
representation does not account for all possible set of use-sites of a field
while merging field nodes. This brings us to our proposed abstraction which
labels field nodes with the set of all use-sites rather than a single use-site.



\begin{figure}
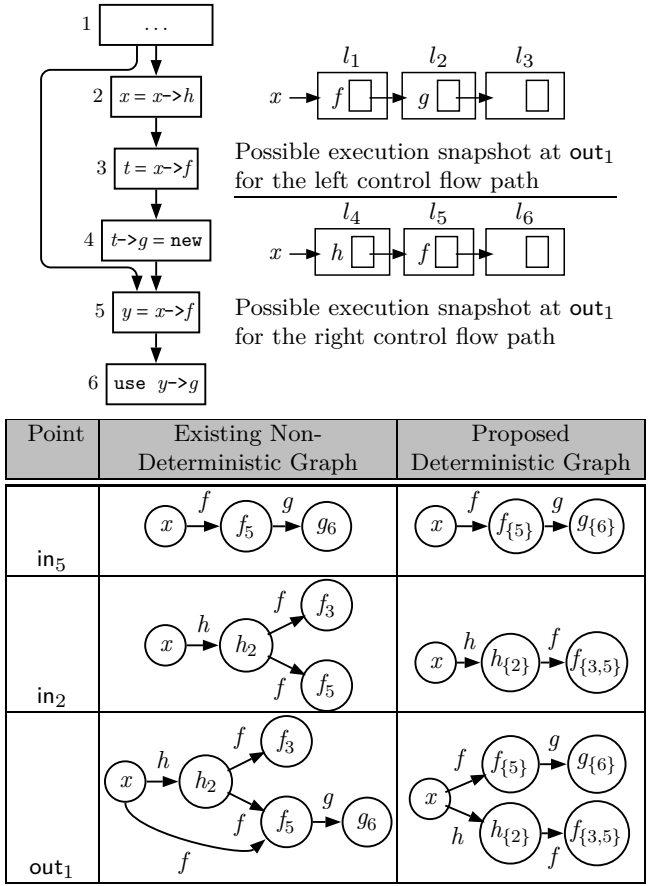

\small
\begin{tabular}{@{}m{2.7cm}m{5.5cm}}
\Gmerge &
\begin{tabular}{@{}l@{}}
\Gmemleft \\ 
\small Possible execution snapshot at $\Outn_1$ \\
\small for the left control flow path \\
\hline
\Gmemright \\
\small Possible execution snapshot at $\Outn_1$ \\
\small for the right control flow path \\
\end{tabular}
\end{tabular}
\small
\begin{tabular}{|c|c|c|}
\hline
\cellcolor{lightgray}
Point & \cellcolor{lightgray}Existing Non- & \cellcolor{lightgray}Proposed  \\
	\cellcolor{lightgray}& \cellcolor{lightgray}Deterministic Graph & \cellcolor{lightgray}Deterministic Graph \\ \hline \hline
$\Inn_5$ & \Gone & \Gnewmergeone \\ \hline
$\Inn_2$ & \Gthree & \Gnewmergethree \\ \hline
$\Outn_1$ &\Gfour &\Gnewmergefour \\ \hline
\end{tabular}
\caption{\small Example. Each node represents live access path(s) reaching it.
For simplicity, aliases of the access paths are not included.
Each node's field label is same as its in-edge label.}
\label{fig:merge}
\end{figure}


\subsection{Proposed Deterministic Graph}
\label{sec:prop-abstr}

We represent a set of live access
paths at program point $\q \in \Q$ as a deterministic graph $\G_\q = \langle \V_\q \cup \N_\q,
\E_\q \rangle$. Here node set $\V_\q \cup \N_\q$ is a discriminated union of
$\V_\q \subseteq \V$, which is a set of variables, and $\N_\q$, which is a set of field nodes,
uniquely identified by a field and its set of use-sites i.e., $\N_\q \subseteq
\F \times 2^{\SS}$. 
In other words, we identify each node with $\f_\tau$ where $\f \in
\F$ and $\tau \subseteq \SS$.

Field nodes are merged if and only if they are represented by the same
$\f_\tau$ across access paths, or within an access path at a program point.
Like the existing representation, all nodes in {${\V_\q \cup \N_\q}$} are
assumed to be accepting states. Set $\V_\q$ is a set of initial
states of the automaton reachable by the variables in $\V_\q$.

Edge set $\E_\q = \E\V_\q \cup \E\F_\q$ is a discriminated
union of
\begin{itemize}
\item $\E\V_\q: \V_\q \rightarrow \N_\q$ which is a set of edges from variables
in $\V_\q$
\item $\E\F_\q: \N_\q \times \F \rightarrow \N_\q$ which is a set of edges from
nodes in $\N_\q$ labeled with fields in $\F$.
\end{itemize}


The example below illustrates how the proposed abstraction improves
summarization.
\begin{myexample}
For the program in Figure~\ref{fig:summ}, at $\Inn_4$, access path $\X.\f.\f$
and its prefixes are live. $\f$ of $l_1$ will be used by statement 4 and $\f$ of
$l_2$ will be used by statements 2 and 4. The proposed abstraction of $\X.\f.\f$ is
shown in the row corresponding to $\Inn_4$. The two nodes $\f_{\{4\}}$ and
$\f_{\{2,4\}}$ are not merged
because their sets of use-sites are different thereby improving the precision of
liveness information as compared to the existing abstraction.
\hfill $\Box$
\end{myexample}

The example below illustrates how the proposed abstraction improves merging of
access paths.
\begin{myexample}
In the program in Figure~\ref{fig:merge}, statements 5 and 6 are equivalent to
``$\text{use }\X \texttt{->} \f \texttt{->} \g$''. Thus, at $\Inn_5$, access path
$\X.\f.\g$ and its prefixes are live. $\f$ of $l_1$ and $\g$ of $l_2$ are used
by the sets of statements $\{5\}$ and $\{6\}$, respectively. The proposed
abstraction is shown in the row corresponding to $\Inn_5$. At $\Inn_2$, access
path $\X.\h.\f$ and its prefixes are live. $\h$ of $l_4$ and $\f$ of $l_5$ will
be used by the sets of statements $\{2\}$ and $\{3,5\}$. The proposed
abstraction is shown in the row corresponding to $\Inn_2$. At the join of the
control flow paths, the nodes $\f_{\{5\}}$ and $\f_{\{3,5\}}$ are not merged because they are
labeled with different sets of use-sites. The spurious access path
$\X.\h.\f.\g$ generated in the existing abstraction is not generated here. 
\hfill $\Box$
\end{myexample}


\section{Analyses for Heap Liveness}
\label{sec:formulations}
This section formalizes existing greedy liveness~\cite{ksk07}
(Section~\ref{sec:greedy}) and our proposed minimal liveness analyses
(Section~\ref{sec:minimal}).  Section~\ref{sec:alias} explains the role of
aliasing in computing the live memory.  It forms the basis of the key difference
between the two analyses.

For simplicity of exposition, we have formulated the data flow values in
Equations~\ref{eq:conv-in-temp} to~\ref{eq:prop-out} as sets of
access paths of unbounded lengths. In the implementation, each data flow value
is represented as a non-deterministic finite automaton in the existing
technique (Section~\ref{sec:conv-abstr}) and as a deterministic finite
automaton in the proposed technique (Section~\ref{sec:prop-abstr}). 

\subsection{The Role of Aliasing}
\label{sec:alias}

A memory link can be {\em reached} by more than one access path i.e., a link may
correspond to the last field of more than one access paths.
Two access paths are said to be {\em link-aliased}~\cite{ksk07} if they share a
memory link. For a sound liveness analysis, all access paths that reach a live
memory link are recorded. Thus, link-aliases of live access paths are
included in the liveness information at each program point so that the full set
of live access paths is recorded. 

%
%


\subsection{Local Analysis for Computing \Gen and \Kill Sets}

The local effects of generation and killing of live access paths is represented by
$\Kill_\s(\cdot)$ and $\Gen_\s(\cdot)$. As defined in Figure~\ref{table:dfe},
they are parameterized by a set of live access paths denoted \Liv.
The figure lists the basic pointer statements in C, excluding those using addressof
operator $\&$ and non-pointer fields, which are handled in
Section~\ref{sec:addr}.
$\Kill_\s(\cdot)$ is the set of access paths that are killed because its
prefix is defined by the statement. $\Gen_\s(\cdot)$ is the set of access
paths that are generated because either they are used as ``$\text{use } \X$'', or
they are used in the RHS to define a live memory link in the LHS.
$\MayLnA\Inn_\s$, $\MayLnA\Outn_\s$ and $\MustLnA\Inn_\s$, $\MustLnA\Outn_\s$
represent pre-computed may- and must-link-alias~\footnote{Must points-to/alias
information can be computed from may points-to/alias information~\cite{kmr12}
or independently~\cite{scdfy13}.} information at $\Inn_\s$ and
$\Outn_\s$.

\subsection{Existing Greedy Liveness Analysis}
\label{sec:greedy}
The data flow equations for the existing analysis compute sets of live
access paths in $\widehat{\LL}$ or more specifically in $\widehat{\LL}\Inn_{\s}$
and $\widehat{\LL}\Outn_{\s}$ at $\Inn_{\s}$ and $\Outn_{\s}$.

Intermediate values ${\widehat{\LL}\,}'\Inn_{\s}$ and
${\widehat{\LL}\,}'\Outn_{\s}$ are computed using $\Gen_\s(\cdot)$ and
$\Kill_\s(\cdot)$ in Equations~\ref{eq:conv-in-temp} and~\ref{eq:conv-out-temp}. The final $\widehat{\LL}\Inn_{\s}$ and
$\widehat{\LL}\Outn_{\s}$ are computed in Equations~\ref{eq:conv-in}
and~\ref{eq:conv-out} by including link-aliases of the
respective intermediate values. Observe that the link-aliases of access
paths are included during the liveness analysis i.e., greedily the full set of live access
paths is propagated from $\Inn_\k$ to $\Outn_\s$ for all $\k \in \Succ(\s)$ in
Equation~\ref{eq:conv-out-temp}.
Below $\END$ denotes the ending statement of the program, and $\Succ(\s)$
returns the set of control flow successors of statement $\s$ in the control flow
graph. 
\\[-2.3ex]
\begin{align}
\label{eq:conv-in-temp}
{\widehat{\LL}\,}'\Inn_{\s} &= (\widehat{\LL}\Outn_{\s} - \Kill_{\s}(\widehat{\LL})) \cup \Gen_{\s}(\widehat{\LL}) \\
\label{eq:conv-in}
\displaystyle \widehat{\LL}\Inn_{\s} &= \bigcup_{\text{\scalebox{1.2}{$\rho \in {\widehat{\LL}\,}'\Inn_{\s}$}}} \MayLnA\Inn_{\s}(\rho) 
\end{align}

{\white .]}\\[-5.7ex]
\begin{align}
\label{eq:conv-out-temp}
{\widehat{\LL}\,}'\Outn_{\s} &= 
	\begin{cases}
	\begin{array}{lr}
	\emptyset & \quad \s = \END \\
	\displaystyle \bigcup_{\text{\scalebox{1.2}{$\k \in \Succ(\s)$}}} \widehat{\LL}\Inn_{\k} & \quad \text{otherwise}
	\end{array}
	\end{cases}\\
\label{eq:conv-out}
\displaystyle \widehat{\LL}\Outn_{\s} &= \bigcup_{\text{\scalebox{1.2}{$\rho \in {\widehat{\LL}\,}'\Outn_{\s}$}}} \MayLnA\Outn_{\s}(\rho) 
\end{align}



The example below illustrates intermediate and final sets of live access paths.

\begin{myexample}
For the program in Figure~\ref{fig:link}, the intermediate live value
${\widehat{\LL}\,}'\Inn_6$ is $\Y.\f.\g$ and its prefixes. Link
aliases are included to obtain the final value  ${\widehat{\LL}}\Inn_6$ which
contains $\X.\f.\g,\X.\f,\W.\f.\g,\W.\f,\Z.\g$ also. This is obtained using $\MayLnA\Inn_6$
which contains that $\Y.\f.\g$ is link-aliased to $\X.\f.\g, \X.\f,\Z.\g$, and
$\X.\f.\g$ is link-aliased to $\W.\f.\g,\W.\f$.  These are propagated backwards
to $\Outn_3$ and $\Outn_4$. 
$\widehat{\LL}\Outn_4$ is imprecise as links of $l_7-l_{11}$ are
dead.  
\hfill $\Box$
\end{myexample}

The backward propagation of live link-aliased access paths in the existing
technique is the source of imprecision which is handled in the proposed
technique in Section~\ref{sec:minimal}.

\subsection*{Representing Live Access Paths and their Link Aliases}
Each data flow value $\widehat{\LL}\Inn_{\s}$ and $\widehat{\LL}\Outn_{\s}$ is
represented as a non-deterministic graph by labeling each field with its
use-sites. However, some access paths that do not have an explicit use in the program 
are still discovered by the analysis due to link-alias computation. Hence
some fields of link-aliases may not have a use-site. For
example, $\X.\f$ may be live at a program point due to the use of field $\f$ in a subsequent
statement, and $\Y.\g.\f$
may be link-aliased to $\X.\f$.  Field $\g$ may not have a use-site because there may
be no subsequent statement that uses $\g$ in some access path link aliased to $\Y.\g$.
We handle this problem by labelling such field nodes with 
allocation-sites of locations pointed to by the fields. Then field nodes are
merged if and only if either their allocation- or use-site is same.


\subsection{Proposed Minimal Liveness Analysis}
\label{sec:minimal}
To prevent the backward propagation of live link-aliased access paths, we
perform a two-phase analysis called {\em minimal liveness analysis}. Phase 1
uses link-alias information only when required i.e., minimally to generate live access paths
used in the RHS of an assignment that define a live link in the LHS.  This
phase performs fixpoint computation by propagating the access paths backwards in the
control flow path.  Phase 2 includes link-aliases of live access paths computed
in phase 1 to generate the full set of live access paths without backward
propagation.

Our proposed data flow equations compute sets of live access paths in ${\LL}$ or
more specifically in ${\LL}\Inn_{\s}$ and ${\LL}\Outn_{\s}$ at $\Inn_{\s}$ and
$\Outn_{\s}$, respectively. These are computed in two phases.  Intermediate values
${{\LL}}'\Inn_{\s}$ and ${{\LL}}'\Outn_{\s}$ are obtained using a fixpoint
computation in Phase 1, and final values are computed in Phase 2.

\begin{figure}
\setlength{\tabcolsep}{3pt}
\begin{tabular}{|l|l|l|}
\hline
\cellcolor{lightgray}Stmt $\s$ & \cellcolor{lightgray}$\Kill_{\s}(\Liv)$ & \cellcolor{lightgray}$\Gen_{\s}(\Liv)$ \\ \hline \hline
$\text{use } \X$ & $\emptyset$ & $\{\X\}$ \\ \hline
$\X = \new$ & $\{\X.*\}$ & $\emptyset$ \\ \hline
$\X = \Null$ & $\{\X.*\}$ & $\emptyset$ \\ \hline
$\X = \Y$ & $\{\X.*\}$ & $\{\Y.\sigma \mid \X.\sigma \in \Liv\Outn_{\s}\}$ \\ \hline
$\X = \Y \texttt{->} \f$ & $\{\X.*\}$ & $\{\Y.\f.\sigma \mid \X.\sigma \in \Liv\Outn_{\s}\}$ \\ 
		& 	& $ \ \cup \ \{\Y \mid \X.\sigma \in \Liv\Outn_{\s}\}$ \\ \hline
$\X \texttt{->} \f = \Y$ & $\{\Z.\f.* \mid \Z.\f$ & $\{\Y.\sigma \mid \rho.\sigma \in \Liv\Outn_{\s},$ \\ 
	& $\in \MustLnA\Inn_{\s}(\X.\f)\}$ & $\rho \in \MayLnA\Inn_{\s}(\X.\f)\} \cup \{\X\}$ \\ \hline
\end{tabular}
\caption{\small Gen and Kill to compute liveness information at entry and exit of
statement $\s$. Here $\rho \in \V \times \F^*$ and $\sigma \in \F^*$.}
\label{table:dfe}
\end{figure}

\begin{figure}
\setlength{\tabcolsep}{.5pt}
\begin{tabular}{|l|l|l|}
\hline
\cellcolor{lightgray}Stmt $\s$ & \cellcolor{lightgray}$\Kill_{\s}(\Liv)$ & \cellcolor{lightgray}$\Gen_{\s}(\Liv)$ \\ \hline \hline
$\X\!=\!\&\Y$ & $\{\X.*\}$ & $\{\Y.\&.\sigma \mid \X.\sigma \in \Liv\Outn_{\s}\}$ \\ \hline
$\X\!=\!\&(\Y \texttt{->} \f)$ & $\{\X.*\}$ & $\{\Y.\f.\&.\sigma \mid \X.\sigma \in \Liv\Outn_{\s}\}$ \\ 
		& 	& $ \ \cup \ \{\Y \mid \X.\sigma \in \Liv\Outn_{\s}\}$ \\ \hline
$\X \texttt{->} \f\!=\!\&\Y$ & $\{\Z.\f.* \mid \Z.\f$ & $\{\Y.\&.\sigma \mid \rho.\sigma \in \Liv\Outn_{\s},$ \\ 
	& $\in \MustLnA\Inn_{\s}(\X.\f)\}$ & $\rho \in \MayLnA\Inn_{\s}(\X.\f)\} \cup \{\X\}$ \\ \hline
\end{tabular}
\caption{\small Gen and Kill to compute liveness information at entry and exit of
statement $\s$. Here $\rho \in \V \times \F^*$ and $\sigma \in \F^*$.}
\label{table:dfe-2}
\end{figure}


{\bf Phase 1.}
Intermediate values ${{\LL}}'\Inn_{\s}$ and ${{\LL}}'\Outn_{\s}$ are computed using
$\Gen_\s(\cdot)$ and $\Kill_\s(\cdot)$. This phase does not include link-aliases
of live access paths.  It uses link-aliases
to compute $\Gen_\s(\cdot)$ and $\Kill_\s(\cdot)$ (Figure~\ref{table:dfe}).

{\white .}\\[-6ex]
\begin{align}
\label{eq:prop-in-temp}
\LL'\Inn_{\s} &= (\LL'\Outn_{\s} - \Kill_{\s}(\LL')) \cup \Gen_{\s}(\LL') \\
\label{eq:prop-out-temp}
\LL'\Outn_{\s} &= 
	\begin{cases}
	\begin{array}{lr}
	\emptyset & \s = \END \\
	\displaystyle \bigcup_{\text{\scalebox{1.2}{$\k \in \Succ(\s)$}}} \LL'\Inn_{\k} & \text{otherwise}
	\end{array}
	\end{cases}
\end{align}


{\bf Phase 2.}
The final ${\LL}\Inn_{\s}$ and ${\LL}\Outn_{\s}$ are computed by including
link-aliases of the respective intermediate values computed in Phase 1. This
full set of live access paths is not propagated across program points.

{\white .}\\[-6ex]
\begin{align}
\label{eq:prop-in}
\displaystyle \LL\Inn_{\s} &= \bigcup_{\text{\scalebox{1.2}{$\rho \in \LL'\Inn_{\s}$}}} \MayLnA\Inn_{\s}(\rho) \\
\label{eq:prop-out}
\displaystyle \LL\Outn_{\s} &= \bigcup_{\text{\scalebox{1.2}{$\rho \in \LL'\Outn_{\s}$}}} \MayLnA\Outn_{\s}(\rho) 
\end{align}

The example below illustrates intermediate and final sets of live access paths.

\begin{myexample}
For the program in Figure~\ref{fig:link}, in phase 1, the intermediate live
value $\LL'\Inn_6$ is $\Y.\f.\g$ and its prefixes. This is
propagated backwards to $\LL'\Outn_3$ and $\LL'\Outn_4$ without including link-aliases
in {phase~1}. $\Gen(\cdot)$ of statement 3 is $\Z$ and $\Z.\g$ since
$\MayLnA\Inn_6$ contains that $\Y.\f.\g$ is link-aliased to $\X.\f.\g$. 

In phase 2, link-aliases are included at $\Inn_6$ to obtain the final
value $\LL\Inn_6$ which contains $\X.\f.\g,\X.\f,\Z.\g$ also.  Similarly at
$\Outn_3$. At $\Outn_4$, since $\Y.\f.\g$ and its prefixes are not link-aliased
to any other access path, $\LL\Outn_4$ does not generate any spurious access
path.
\hfill $\Box$
\end{myexample}

\subsection*{Representing Live Access Paths and their Link Aliases}
Each data flow value ${\LL}\Inn_{\s}$ and ${\LL}\Outn_{\s}$ is represented as a
deterministic graph by labeling each field with its set of use-sites.  Set of
allocation-sites are labeled on link-aliased fields that do not have a use-site
(Section~\ref{sec:greedy}). 

\subsection{Addressof Operator and Non-Pointer Fields}
\label{sec:addr}
Until now, we have assumed that all variables and fields are of pointer types. For
example, live access path $\X.\f$ represents that pointer field $\f$ is
dereferenced from the location pointed to by pointer variable $\X$. We now extend
access paths to represent non-pointers also. A non-pointer variable $\X$ and a
non-pointer field $\f$ are represented by $\X.\&$ and $\f.\&$, in
the live access paths. This correctly models C style address expressions
`$\&\X$' and `$\&(\X.f)$' as pointers.
For example, live access path $\X.\&.\f$ represents that
pointer field $\f$ is dereferenced from a non-pointer variable $\X$. Live
access path $\X.\f.\&.\g$ represents that non-pointer field $\f$ of the location
pointed to by variable $\X$ dereferences pointer field $\g$.

Liveness analysis of C pointer statements containing $*$, $\&$,
$\X\texttt{->}\f$, and $\X.\f$ is explained below.
\begin{itemize}
\item We model $*$ as field $\deref$; the
statement $\X=*\Y$ is modeled as $\X=\Y\texttt{->}\deref$. Therefore, before
the statements, $\X=*\Y; \text{ use } \X;$, access path $\Y.\deref$ is marked
as live.  
\item C pointer statements containing the
addressof operator `\&' are handled in Figure~\ref{table:dfe-2}.
\item 
C pointer statements containing $\X\texttt{->}\f$ (i.e. pointer variable $\X$)
in LHS/RHS are handled in Figure~\ref{table:dfe}.
\item C pointer statements containing $\X.\f$ (i.e. non-pointer variable $\X$)
in LHS/RHS
are modeled with $\t\texttt{->}\f$ where $\t=\&\X$. For
example, statement $\X.\f=\Y;$ is modeled as $\t=\&\X; t\texttt{->}\f=\Y$;
Expression $\X.\f.\g$ in a C statement can also be modeled similarly using only
$\&$ and $\texttt{->}$.
\end{itemize}
Thus, Figures~\ref{table:dfe} and~\ref{table:dfe-2} handle all C pointer statements.





\section{Interprocedural Analysis}
\label{sec:inter}
We perform context-sensitive interprocedural analysis using
value-contexts~\cite{kk08,pk13}. We traverse
the call graph top-down and represent each
context of a call by an input-output pair of data flow values. 
Thus a new context is created for a procedure only when a new data flow reaches it
through some call.  The
number of contexts is finite because the number of data flow values is finite. 
This follows from the fact that the total number of field nodes that could be
created is bounded by the number of combinations that can be created from the variables, field names, and
the use-sites in a program and there are no parallel edges.
This enables performing a fully context-sensitive analysis even for recursive calls.

For scalability, we perform access-based  localization~\cite{oy11} to
eliminate irrelevant data.
We observe that not all live access paths of the caller need to be passed to the callees. 
Live access paths that are not used in the (direct/indirect) callees, can
be bypassed from the callees (Section~\ref{sec:bypass-unused}). Live
access paths that are not used in the direct callees but used in the indirect
callees, can be memoized in the direct callees
(Section~\ref{sec:bypass-unaff}).




\subsection{Bypassing Unused Information}
\label{sec:bypass-unused}
If a memory link in a caller's live access path is defined
in any of its (direct/indirect) callees, then the live access path needs to be
passed from the caller to the callees. 
The rest can be bypassed from the callees.




\begin{figure}
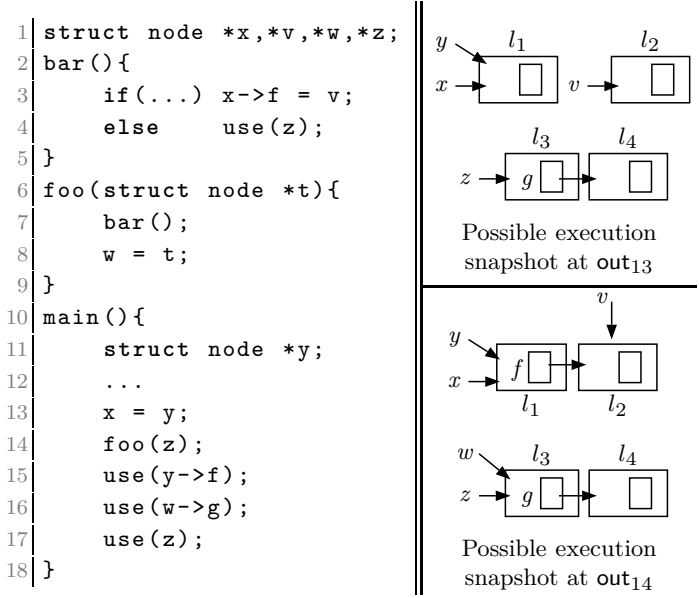

\small
\begin{tabular}{l||@{}l@{}}
\begin{lstlisting}
struct node *x,*v,*w,*z;
bar(){
    if(...) x->f = v;
    else    use(z);
}
foo(struct node *t){
    bar();
    w = t;
}
main(){
    struct node *y;
    ...
    x = y;
    foo(z);
    use(y->f);
    use(w->g);
    use(z);
}
\end{lstlisting}
&
\begin{tabular}{@{}l@{}l@{}}
\Gmemxy & \Gmemv \\ 
\multicolumn{2}{l}{\Gmemzg} \\
\multicolumn{2}{c}{\small Possible execution} \\
\multicolumn{2}{c}{\small snapshot at $\Outn_{13}$} \\
& \\[-2ex]
\hline
 \multicolumn{2}{l}{\Gmemxyf} \\
 \multicolumn{2}{l}{\Gmemzwg} \\
\multicolumn{2}{c}{\small Possible execution} \\
\multicolumn{2}{c}{\small snapshot at $\Outn_{14}$} \\
\end{tabular}
\end{tabular}
\caption{\small Example to illustrate bypassing and memoization used for an efficient
interprocedural liveness analysis.}
\label{fig:inter}
\end{figure}

\begin{myexample} 
There are two function calls, foo() and bar(), in Figure~\ref{fig:inter}.  Live
access path $\Z$ of main() is not defined in callees, foo() and bar().  So, we
bypass $\Z$ from foo() and bar(). Live access path $\W.\g$ of main() is killed
in foo() when $\W$ is defined. So we pass $\W.\g$ from main() to foo(). Live
access path $\Y.\f$ of main() is defined by $\X\texttt{->}\f$ in bar() since
$\X.\f$ and $\Y.\f$ are link-aliased.  So, we pass $\Y.\f$ from main() to
foo() and also from foo() to bar().  Access path $\t.\g$ that is generated in
foo(), is not defined in bar(); therefore, we bypass $\t.\g$ from bar().
\hfill $\Box$
\end{myexample}

\subsection{Memoizing Unaffected Information}
\label{sec:bypass-unaff}
If a memory link of a live access path of a caller is defined in an indirect
callee and not in the direct callee, then the access path passed to the direct
callee will remain unaffected at all program points of the direct callee. Such
access paths need not be passed flow-sensitively, and can be be memoized in
the context of the direct callee flow-insensitively. 


\begin{myexample} 
There are two function calls, foo() and bar(), in Figure~\ref{fig:inter}.  Live
access path $\W.\g$ of main() is killed in foo() when $\W$ is defined 
i.e., $\W.\g$ is live at $\Outn_{8}$ but dead at $\Inn_{8}$ and $\Inn_{7}$; thus,
$\W.\g$ is passed flow-sensitively in foo() context. Live access path $\Y.\f$ of
main() is passed to foo() because $\Y.\f$ is defined by $\X \texttt{->} \f$ in
bar(). However, $\Y.\f$ is not defined in foo() i.e., $\Y.\f$ remains unaffected
at all program points in foo()---$\Y.\f$ is live before and after all
statements of foo(). Therefore, $\Y.\f$ can be memoized and need not be passed
flow-sensitively in foo(). However, $\Y.\f$ should be passed flow-sensitively in
bar() because $\Y.\f$ is live at $\Outn_4,\Inn_4,\Outn_3$ but dead at $\Inn_3$.
\hfill $\Box$
\end{myexample}

\section{Implementation and Measurements}
\label{sec:impl}
We have implemented flow-, context-, and field-sensitive heap liveness
and allocation sites-based points-to analyses~\cite{cwz90} in
GCC 4.7.2 with LTO on 64bit Ubuntu 14.04 using single Intel core
i7-3770 at 3.40 GHz and 8 GiB RAM.  We have measured all C
heap programs in SPEC CPU2006 up to 36 kLoC. We have manually validated the results 
on 7 SPEC CPU2006 and 14 SVComp 2016 benchmarks. For validation,
we have compared these results with brute
force variants that do not perform efficiency related optimizations.


\begin{figure}[t]
\setlength{\tabcolsep}{1.7pt}
\small
\centering
\begin{tabular}{|l|r|r|r|r|r|r|r|}
\hline
\cellcolor{lightgray}SPEC
&\cellcolor{lightgray}kLoC
&\cellcolor{lightgray}Funcs
&\cellcolor{lightgray}Blocks
&\cellcolor{lightgray}Stmts
&\cellcolor{lightgray}Allocs
&\cellcolor{lightgray}Vars
&\cellcolor{lightgray}PT time(s) \\ \hline \hline
lbm     &       0.9     &       17      &       361     &       326     &       1       &       287     &       0.002 \\ \hline
mcf     &       1.6     &       22      &       560     &       449     &       3       &       205     &       0.018 \\ \hline
libq    &       2.6     &       55      &       1236    &       193     &       7       &       322     &       0.013 \\ \hline
milc    &       5.7     &       114     &       3375    &       1379    &       5       &       574     &       0.176 \\ \hline
bzip2   &       9.5     &       56      &       4297    &       1330    &       39      &       1197    &       0.047 \\ \hline
sjeng   &       10.5    &       81      &       5986    &       373     &       10      &       523     &       0.272 \\ \hline
hmmer   &       20.6    &       219     &       8880    &       4879    &       11      &       3217    &       3.521 \\ \hline
h264ref &       36.0    &       497     &       23985   &       16672   &       163     &       8489    &       105.556 \\ \hline
\end{tabular}
\caption{\small Number of lines of code, functions, blocks,
pointer assignments, allocs, variables, and allocation-sites based
points-to analysis time (seconds) in SPEC CPU 2006.}
\label{fig:bench}
\end{figure}

\begin{figure}
{\white .} \\[-3ex]
\includegraphics[width=85mm]{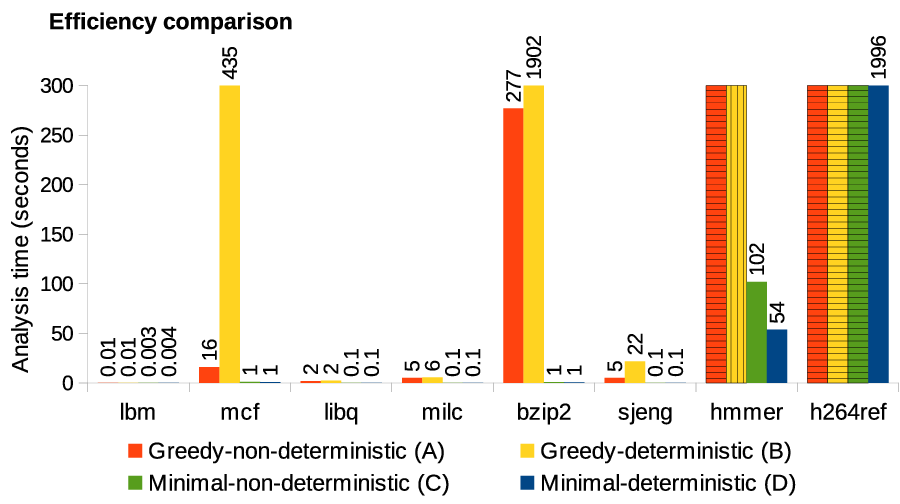} 
\caption{\small Analysis time in seconds of four variants of heap liveness
analysis. Bars with horizontal lines denote out-of-memory and bars with vertical
lines denote analysis time more than 6 hours. 
}
\label{fig:eff-scale}
\end{figure}




\begin{figure*}[t]
\setlength{\tabcolsep}{1pt}
\renewcommand{\arraystretch}{1}
\centering
\small
\begin{tabular}{@{}l@{}l@{}l@{}l@{}}
\includegraphics[height=21.8mm]{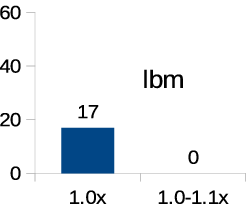} &
\includegraphics[height=21.8mm]{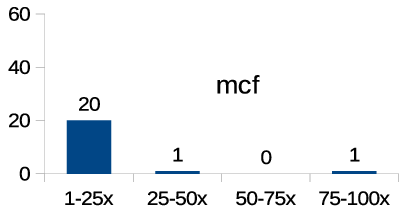} &
\includegraphics[height=21.8mm]{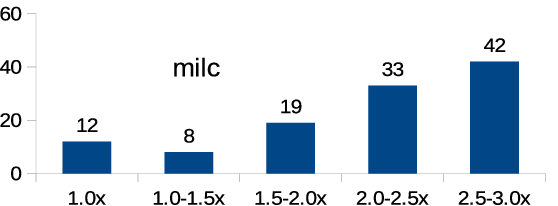} &
\multirow{-6}{*}{
	\begin{tabular}{p{3.3cm}}
	{\bf x-axis}: buckets of the number of times reduction in liveness info using proposed vs.
existing technique. \\
	{\bf y-axis}: number of functions.\\
	{\bf For example}, sjeng has 31 functions on which the proposed technique computes
30-35 times smaller liveness info.
	\end{tabular}}
\\
\includegraphics[height=21.8mm]{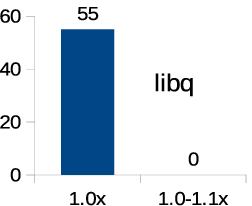} &
\includegraphics[height=21.8mm]{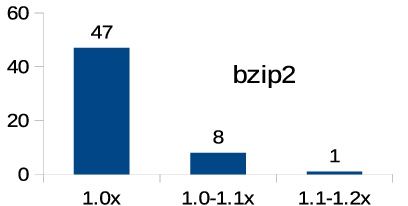} &
\includegraphics[height=21.8mm]{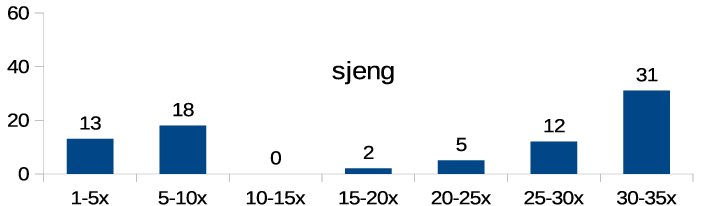} \\
\\[-4ex]
\end{tabular}
\caption{\small Number of times reduction in liveness information computed
using proposed technique vs. existing technique in terms of the
live access paths in each function. Existing
technique does not scale to hmmer and h264ref.}
\label{fig:prec-1}
\end{figure*}

\begin{figure*}[t]
\setlength{\tabcolsep}{1pt}
\renewcommand{\arraystretch}{1}
\small
\begin{tabular}{@{}l@{}l@{}l@{}l@{}}
\includegraphics[height=22mm]{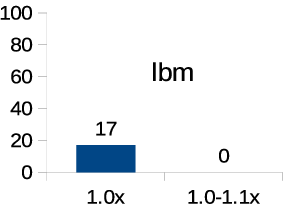} &
\includegraphics[height=22mm]{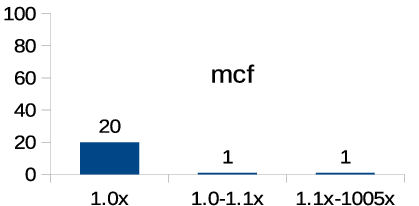} &
\includegraphics[height=22mm]{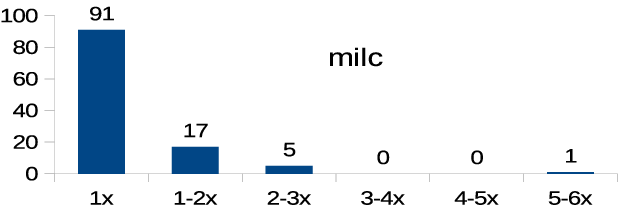} &
\multirow{-6}{*}{
	\begin{tabular}{p{3.4cm}}
	{\bf x-axis}: buckets of the number of times reduction in liveness info using 
proposed vs. existing technique. \\
	{\bf y-axis}: number of functions.\\
	{\bf For example}, sjeng has 3 functions on which the proposed technique 
computes 55-65 times smaller liveness info.
	\end{tabular}}
\\
\includegraphics[height=22mm]{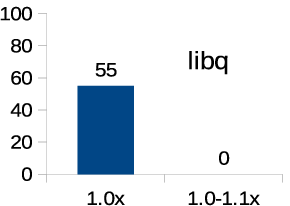} &
\includegraphics[height=22mm]{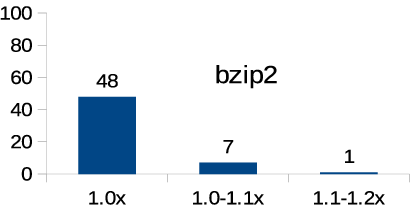} &
\includegraphics[height=22mm]{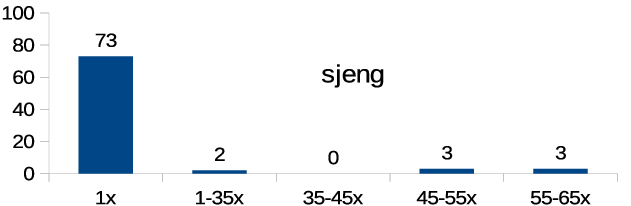} & \\
\\[-4ex]
\end{tabular}
\caption{\small Number of times reduction in liveness information computed
using proposed technique vs. existing technique in terms of the
live access paths within each function's lifetime.
Existing technique does not scale to hmmer and h264ref.}
\label{fig:prec-2}
\end{figure*}


\subsection{Language Features}
All variants of liveness and points-to analyses handle advanced
features of C except pointer arithmetic on structures. Targets of
function pointers are discovered with points-to analysis. Arrays are
abstracted as a single variable without distinguishing between the indices
thereby allowing us to ignore pointer arithmetic on arrays. A function's
address escaped variables are not removed because they can be passed
to the callers.


\subsection{Variants of Heap Liveness Analysis}
In order to study the impact of greedy vs. minimal liveness analysis and the
impact of non-deterministic vs. deterministic automata, we have implemented four
combinations of these techniques. The combination viz. greedy liveness with
non-deterministic automata (labelled as A in the charts) is the existing
technique~\cite{ksk07}. The other three combinations (labelled as B, C, and D
in the charts) are our proposed improvements.  For consistency, we perform
efficiency related optimizations (Section~\ref{sec:inter}) on all variants, and
we perform strong updates in none of the variants.

\subsection{Use of Heap Liveness Analysis}
\label{sec:use}
It is important to identify the following parts of heap using heap liveness
analysis so that one could manage cache misses properly and reclaim unused heap
memory.
\begin{itemize}
\item Part of the heap that is {\em live within the lifetime of a function}. This 
 part can be prefetched for efficiency in the cache on call to the function.
\item Part of the heap that is {\em live in a function}. This part is
required in the program. The rest can be reclaimed in the function since it
remains unused.
\end{itemize}

\begin{myexample}
For the program in Figure~\ref{fig:inter}, the following links are live when
foo() is called: $\X,\Y,\v,\Z$, and $\g$ of location $l_3$. Out of these the
following links are live only within the lifetime of foo(): $\X,\v$, and $\Z$. 
\hfill $\Box$
\end{myexample}

\begin{figure*}[t]
\small
\setlength{\tabcolsep}{3pt}
\begin{tabular}{|l|r|r|r|r|r|r|r|r|r|r|r|}
\hline
\multirow{3}{*}{\cellcolor{lightgray}} & \multicolumn{3}{c|}{\cellcolor{lightgray}}  &
\multicolumn{4}{c|}{\cellcolor{lightgray}Live \% of Heap in a Function} &
\multicolumn{4}{c|}{\cellcolor{lightgray}Live \% of Heap in a Function's Lifetime} \\ \hhline{~~~~--------} 
 \cellcolor{lightgray}& \multicolumn{3}{c|}{\multirow{-2}{*}{\cellcolor{lightgray}Allocated Heap}} & \multicolumn{2}{c|}{\cellcolor{lightgray}Variant A (conv.)} &
\multicolumn{2}{c|}{\cellcolor{lightgray}Variant D (prop.)} & \multicolumn{2}{c|}{\cellcolor{lightgray}\ \ Variant A
(conv.)\ \ } & \multicolumn{2}{c|}{\cellcolor{lightgray}Variant D (prop.)} \\ \hhline{~-----------} 
 \multirow{-3}{*}{\cellcolor{lightgray}SPEC} & \cellcolor{lightgray}Median & \cellcolor{lightgray}Avg & \cellcolor{lightgray}Max & \cellcolor{lightgray}Median & \cellcolor{lightgray}Min & \cellcolor{lightgray}Median & \cellcolor{lightgray}Min & \cellcolor{lightgray}\ Median \ & \cellcolor{lightgray}Min &
 \cellcolor{lightgray}\ Median \ & \cellcolor{lightgray}Min \\ \hline \hline
lbm	&	3	&	3	&	4	&	100\%	&	100\%	&	100\%	&	100\%	&	50\%	&	33\%	&	50\%	&	33\%		\\ \hline
mcf	&	13,058	&	13,108	&	23,186	&	100\%	&	100\%	&	100\%	&	1\%	&	100\%	&	0\%	&	100\%	&	0\% 		\\ \hline
libq	&	5	&	4	&	5	&	100\%	&	40\%	&	100\%	&	40\%	&	100\%	&	0\%	&	100\%	&	0\% 		\\ \hline
milc	&	17	&	16	&	22	&	81\%	&	43\%	&	37\%	&	23\%	&	20\%	&	0\%	&	13\%	&	0\% 		\\ \hline
bzip2	&	2	&	3,435	&	102,541	&	100\%	&	0\%	&	100\%	&	0\%	&	67\%	&	0\%	&	67\%	&	0\% 		\\ \hline
sjeng	&	128	&	126	&	249	&	98\%	&	0\%	&	4\%	&	0\%	&	1\%	&	0\%	&	1\%	&	0\% 		\\ \hline
hmmer	&	225,890	&	205,029	&	691,792	&	?	&	?	&	100\%	&	0\%	&	?	&	?	&	100\%	&	0\% 		\\ \hline
h264ref	&	7,890	&	6,664	&	11,029	&	?	&	?	&	51\%	&	21\%	&	?	&	?	&	5\%	&	0\% 		\\ \hline

\end{tabular}

{E.g., on sjeng, existing analysis reports 98\% memory as live; proposed analysis more precisely reports 4\% memory as live.}
\caption{\small Variant D (proposed) more precisely identifies less percentage of
heap as live than Variant A (existing). 
100\% live memory means that there is no scope of memory reclamation, and 0\%
live memory means that all the received memory can be reclaimed. Variant A
does not scale to hmmer and h264ref. Allocated heap is in terms of number of
access paths.}
\label{fig:prec-3}
\end{figure*}


\subsection{Measuring Amount of Heap Liveness Information}
We compare variants A, B, C, and D of liveness analysis in
terms of the amount of liveness 
within the lifetime of each function and also in each function.
%
For this, we count the number of live
access paths; using allocation-sites to name live objects is too coarse
an abstraction.
Since different variants summarize access paths
differently, a fair comparison is made by reporting access paths of the same
length after fixpoint computation. We have chosen to
retrieve access paths up to lengths 4 (i.e. a variable followed by 3 fields)
for hmmer and up to lengths 5 for rest of the benchmarks. 
Extraction of longer live access paths from the computed existing liveness
graphs, e.g., in hmmer goes out of memory on 8 GB RAM and does not
terminate within 24 hours on 165 GB RAM. This length does not limit
the precision of the techniques since the limiting is applied only 
after fixpoint computation. We extract these counts at the beginning of
each function. 

\subsection{Empirical Observations}
Below we compare the efficiency, scalability, and amount of liveness information
of the four variants
of liveness analysis on SPEC CPU2006 benchmarks (Figure~\ref{fig:bench}).

\subsubsection*{Scalability and Efficiency}
Figure~\ref{fig:eff-scale} shows that variants A and B scale to only 10.5 kLoC
(sjeng), variant C, due to the use of minimal liveness, scales to 20.6 kLoC
(hmmer), and variant D, due to the use of deterministic automaton, scales to 36
kLoC (h264ref). Comparing A with B and C with D, we see that the use of
deterministic automata is more scalable but less efficient than
non-deterministic automata since some time is spent in making the graph
deterministic. Comparing A with C, we see that minimal liveness is more
scalable and significantly more efficient than greedy liveness since it
includes link-aliases minimally.  Variant D is able to analyse
h264ref in 1996 seconds and mcf in 1 second. The average number of edges is
17121 in the existing liveness graph (variant A) and only 5858 in the proposed
liveness graph (variant D) of mcf. 
Overall, variant D (proposed) is more scalable and more efficient
than variant A (existing).

\subsubsection*{Amount of Liveness Information}
Figures~\ref{fig:prec-1} and~\ref{fig:prec-2} present the number of times
reduction in liveness information computed by variant D versus variant
A without compromising on soundness (Appendix~\ref{sec:sound-live}). Amount of
liveness information computed by variants B and C is similar to variants A and
D, respectively.  

As an example, we explain the reduction in liveness information for 81
functions in sjeng (Figure~\ref{fig:prec-1}).  The x-axis of this bar chart has
seven buckets: 1-5x, 5-10x, 10-15x, $\dots$, 30-35x.  On
 31 functions, the proposed technique computes 30-35 times smaller
liveness information. E.g., a function in sjeng receives an allocated
heap of 128 access paths; existing technique reports 125 live access paths and
proposed technique reports only 4 live access paths. Thus, 125/4=31.25 (i.e.,
30-35x bucket). Observe that the number of functions shown on all the bars in
this chart, add up to 81.  Overall, on benchmarks (mcf, milc, sjeng), our
technique shows multifold reduction in the amount of computed liveness, ranging
from 2 to 100 times (in terms of the number of live access paths), without
compromising on soundness.  Figure~\ref{fig:prec-2} shows that the proposed
technique computes 55-65 times smaller liveness in the lifetime of 3 functions
in sjeng. In all the charts, the amount of liveness computed using the proposed
technique is smaller.

We compute both the allocated memory and the percentage of live access paths to
identify benchmarks where memory needs to be reclaimed and cache misses need to
be managed (Figure~\ref{fig:prec-3}). 
The median of the live percentage of access paths for 81
functions in sjeng is 4\% using the proposed technique.  This median in bzip2
and hmmer is poor (100\% access paths is live). However, the minimum on bzip2 and hmmer
is good (0\% access paths is live) and the standard deviation (not shown in the table)
is 17\% and 48\% respectively.  Median 37\% indicates that half the number of
functions have a liveness smaller than 37\%. Median 100\%
indicates that a majority of the functions have a liveness of 100\%.  Overall,
we see that in SPEC CPU2006, very small percentages of the allocated heap are
used both in a function and within its lifetime.

Note that the count of live access paths does not say anything conclusively about
the relative precision of our proposed technique. This is because the numbers
include aliased access paths too and a coarse aliasing may inflate the number
without actually over-approximating the set of live locations. 
Precision is studied in Appendix~\ref{sec:repr}.

\section{Related Work}
\label{sec:rel}
We have already presented a detailed comparison with the existing
technique~\cite{ksk07}, which comes closest to our work. This section
highlights other work related to backward analyses. Details on forward heap
analyses can be found elsewhere~\cite{kk16}. 

Heap liveness information has been modeled as a
store-based~\cite{syks03,amss06,kmr12,kk14} and 
storeless~\cite{cr06,hdh02,gmf06,ksk07}.  The former uses allocation-sites,
types, variables that point to the same locations, and the latter uses access
paths.

Shape analysis techniques~\cite{syks03,amss06} first build a heap points-to
graph based on three-valued logic in a forward analysis. Live objects are then
marked on the points-to graph in a backward analysis using heap safety
automaton~\cite{syks03} and reference counting~\cite{cr06}. These techniques
scale to small benchmarks of SPECjvm98~\cite{syks03} and handwritten
programs~\cite{amss06}.

Allocation-sites~\cite{kmr12,kk14} based summarization of heap
liveness is coarse since even large
benchmarks contain few allocation-sites (Figure~\ref{fig:bench}). For each
allocation-site, the program is sliced using def-use chains with its value-flow
dependences computed using Anderson's points-to analysis to check memory
leaks~\cite{syx12,syx14,yscx16}. 

The number of distinctions made by context sensitive heap cloning~\cite{lla07,slx13,wl04,xr08} 
is a much bigger exponential than that made by sets of access paths.
However, the distinctions need not be made for all
interprocedural paths but only for the paths that use a heap location
differently.  If an allocation site has a million call contexts reaching it but
the location is always referred to by the access path $\X.\f$, there is no
precision gain by distinguishing between all these million heap clones as they
are used in the same way in the program.  
Thus, literature~\cite{slx13,xr08} merges contexts with equivalent points-to
information.  However, they scale at the cost of precision
(they are flow-insensitive, 1-object-sensitive, perform $k$-limiting
equivalence of points-to graphs) and are demand based.

Liveness analysis of only stack variables~\cite{gmf06,hdh02}
has also been performed by considering all heap objects as live.

{ Techniques that use access paths~\cite{sdab16} are incomparable to our
work since they compute points-to information (and not liveness information
like us), and perform demand driven analysis (and not exhaustive analysis like
us). Techniques make points-to graph
deterministic~\cite{tlx17} to only check equivalence of graphs
using HK algorithm; unlike us they do not store it in
a deterministic graph.}


\section{Conclusions}
\label{sec:concl}
Heap liveness analysis can be used to reclaim unused heap memory and improve
cache behaviour. Existing heap liveness analysis~\cite{ksk07} performs greedy
analysis, which includes aliases of live access paths during backward liveness
analysis for soundness. We show how aliases can be considered minimally in a
sound analysis. This reduces the amount of liveness information, and improves
efficiency and scalability.

Further, the existing technique assumes that repeating $\f_\s$ in an access
path (where $\f$ is a field and $\s$ is its use-site) indicates that sequences of
fields after repeating $\f_\s$ would be same. We observe that this need not always
be the case and it is possible that the sequences of fields after repeating $\f_\s$
may be different in access paths if all use-sites of the repeating $\f$ are not
the same along all execution instances. We show that it is more precise to
assume that the sequences of fields are same if repeating $\f$ are dereferenced
in the same set of statements along all execution instances. This 
improves scalability.

Our empirical measurements on SPEC CPU2006 benchmarks show that improved 
liveness analysis scales to 36 kLoC (existing analysis scales to 10.5 kLoC) 
and  improves efficiency even up to 99\%.  For some of the
benchmarks (mcf, milc, sjeng, hmmer, h264ref) { our technique shows multifold
reduction in the amount of liveness information}, ranging from 2 to 100 times
(in terms of the number of live access paths), without compromising on
soundness.


\appendix


\begin{figure*}[t]
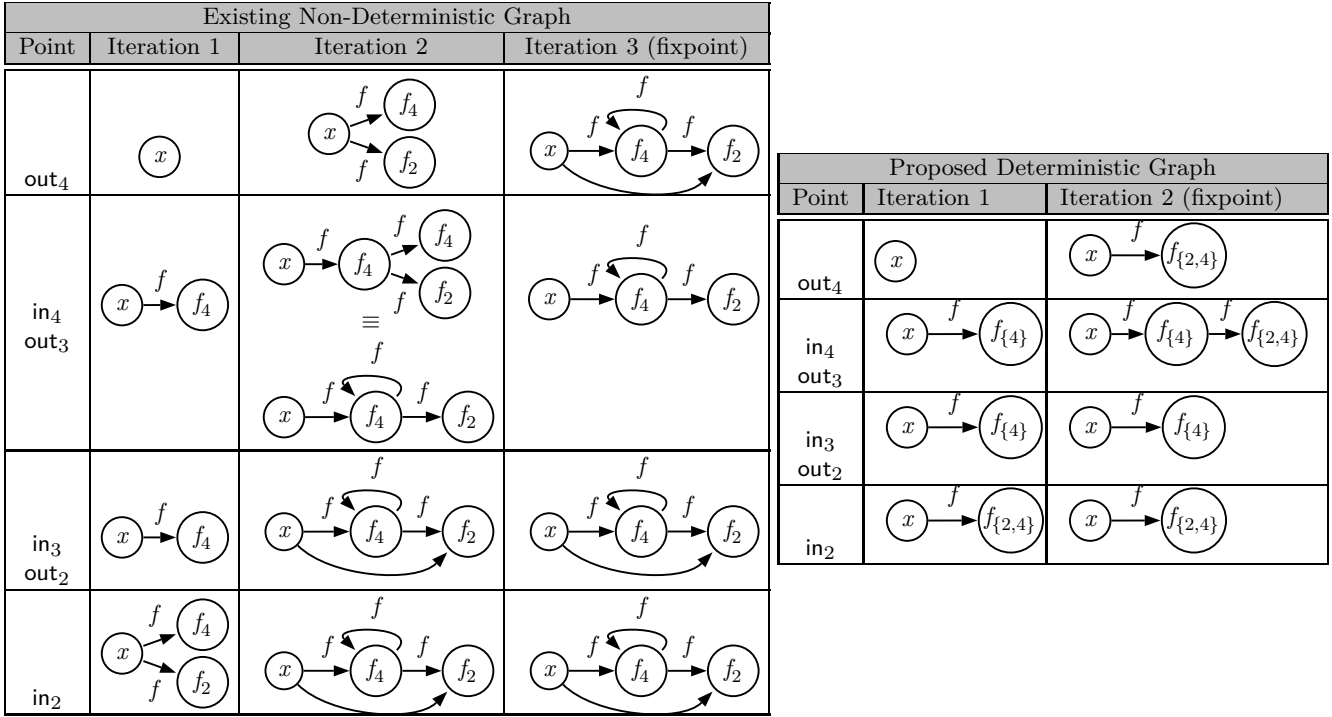

\small
\begin{tabular}{cr}
\begin{tabular}{|c|c|c|c|}
\hline
\multicolumn{4}{|c|}{\cellcolor{lightgray}Existing Non-Deterministic Graph} \\ \hline
 \cellcolor{lightgray}Point & \cellcolor{lightgray}Iteration 1 & \cellcolor{lightgray}Iteration 2 & \cellcolor{lightgray}Iteration 3 (fixpoint) \\ \hline \hline
$\Outn_4$ & \Goldx & \Goldone &  \Goldfour \\ \hline
\begin{tabular}{@{}c@{}} $\Inn_4$ \\ $\Outn_3$ \end{tabular}
	& \Goldtwo &\begin{tabular}{@{}c@{}}\Goldthree \\ $\equiv$ 
		\\ \Goldthreefinal \end{tabular} & \Goldthreefinal \\[-1ex] \hline
\begin{tabular}{@{}c@{}} $\Inn_3$ \\ $\Outn_2$ \end{tabular}
	& \Goldtwo & \Goldfour & \Goldfour \\ \hline
$\Inn_2$ & \Goldone & \Goldfour & \Goldfour \\ \hline
\end{tabular}
\small
\begin{tabular}{|c|l|l|}
\hline
\multicolumn{3}{|c|}{\cellcolor{lightgray}Proposed Deterministic Graph} \\ \hline
 \cellcolor{lightgray}Point & \cellcolor{lightgray}Iteration 1 & \cellcolor{lightgray}Iteration 2 (fixpoint) \\ \hline \hline
$\Outn_4$ & \Goldx & \Gnewone \\ \hline
\begin{tabular}{@{}c@{}} $\Inn_4$ \\ $\Outn_3$ \end{tabular}
	& \Gnewtwo &\Gnewthree \\ \hline
\begin{tabular}{@{}c@{}} $\Inn_3$ \\ $\Outn_2$ \end{tabular}
	& \Gnewtwo & \Gnewtwo \\ \hline
$\Inn_2$ & \Gnewone & \Gnewone \\ \hline
\end{tabular}

\end{tabular}
\caption{\small Step by step working of existing and proposed liveness
analyses on the program in Figure~\ref{fig:summ}. Each node represents
live access path(s) reaching it. Each node's field label
is same as its in-edge label. For simplicity, aliases of the access paths
are not included, and live access paths rooted at only $\X$
are shown.}
\label{fig:summ-steps}
\end{figure*}

\section{Abstracting Live Heap: Example}
Figure~\ref{fig:summ-steps} gives a step by step working of the existing and
the proposed liveness analyses on the program in Figure~\ref{fig:summ}.  The
proposed abstraction is not only more precise than the convention abstraction
but takes less number of iterations to reach fixpoint.

\section{Soundness}
\label{sec:sound-live}
In this section, we define liveness for an execution and then use it to prove soundness 
of our proposed heap liveness analysis.

\subsection{Definition of Liveness for an Execution}
\label{sec:exec-live}

Let $\pi$ denote an execution path represented by a sequence of possibly repeating statements. 
Let $\LnA{\white}_\q(\pi,\rho)$ denote link alias information
of access path $\rho$ in the concrete memory at $\q$
for the execution path $\pi$.  $\LnA{\white}\Inn_\s(\pi,\rho)$ and
$\LnA{\white}\Outn_\s(\pi,\rho)$ denote link alias information at entry and exit of
statement $\s$.  

Let $\q_b$ and $\q_a$ denote program points before and after a sequence of one
or more statements in an execution path.  We relate the access paths at $\q_a$
to the access paths at $\q_b$ by incorporating the effect of both the
intervening statements and of the statements executed before $\q_b$. 

Given an access path $\rho$ at $\q_a$, we define a transfer function
$\TT(\pi',\pi,\rho)$ representing a set of access paths at $\q_b$ that can be
used to reach the link represented by the last field of $\rho$.  The parameters
of $\TT(\pi',\pi,\rho)$ are explained below.
\begin{itemize}
\item $\pi$ is an execution path from $\START$ of the program\footnote{Execution path 
$\pi$ is required in the transfer function $\TT(\cdot)$ to find link aliases along this 
execution path.}.
\item $\pi'$ is a sequence of statements in $\pi$ that appear between
program points $\q_b$ and $\q_a$.
\item $\rho$ is an access path of interest at $\q_a$. 
\end{itemize}

$\TT(\pi',\pi,\rho)$ is defined using $\RR(\pi',\pi,\rho)$ as follows.
$$\TT(\pi',\pi,\rho) = \LnA{\white}_{\q_b}(\pi,\RR(\pi',\pi,\rho))$$


Function $\RR(\cdot)$, defined in Figure~\ref{fig:sound-def}\footnote{The
definition has been adapted from existing heap liveness analysis~\cite{ksk07}.
The existing technique does not use link aliases in its soundness proof
although it uses in the data flow equations.}, captures the transitive effect
of backward transfers of $\rho$ through $\pi'$, and also considers alias
information from $\START$ of the program to end of $\pi'$ i.e., until $\q_a$. 

\begin{figure*}[t]
{
$
\RR(\pi',\pi,\rho) = 
        \begin{cases}
        \begin{array}{l@{\quad \quad}l@{\!\!\!\!\!\!\!\!}l}
	\displaystyle \bigcup_{\rho' \in \RR(\pi_2,\pi,\rho)} \RR(\pi_1,\pi,\rho')	 &   \text{if } \pi' \text{ is a sequence } \pi_1;\pi_2\\
        \{\Y.\sigma\}       & \text{if } \pi' \text{ is } \X=\Y, 			& \rho = \X.\sigma \\
        \{\Y.\f.\sigma\}     & \text{if } \pi' \text{ is } \X=\Y \texttt{->}\f, 	& \rho = \X.\sigma \\
        \{\Y.\sigma\}	& \text{if } \pi' \text{ is } \X\texttt{->}\f=\Y,   	& \forall \sigma, \rho = \rho'.\sigma, \\
			&							& \rho' \in \LnA{\white}\Inn_{\pi'}(\pi,\X.\f)  \\
        \{\rho\}		    & \text{if } \pi' \text{ is a use stmt} 		& \\
        \{\rho\}           	& \text{if } \pi' \text{ is } \X=\dots, 		& \X \text{ is not a prefix of } \rho \\
        \{\rho\}           	& \text{if } \pi' \text{ is } \X\texttt{->}\f=\dots, 	& \nexists \rho', \rho' \in \LnA{\white}\Inn_{\pi'}(\pi,\X.\f) \\
			&							&  \text{ where } \rho' \text{ is a prefix of } \rho \\
        \emptyset      	    & \text{if } \pi' \text{ is } \X=\new, 		& \rho = \X.\sigma    \\
        \emptyset           & \text{if } \pi' \text{ is } \X=\Null, 		& \rho = \X.\sigma \\
	\{\rho\}		&  \text{otherwise}
        \end{array}
        \end{cases} 
$
}
\caption{Defining liveness for an execution.}
\label{fig:sound-def}
\end{figure*}

%


\subsection{Proving Soundness}
This section proves the soundness of 
Equations~\ref{eq:prop-in-temp},~\ref{eq:prop-out-temp},~\ref{eq:prop-in},
and~\ref{eq:prop-out} (this proof has been adapted from 
existing heap liveness analysis~\cite{ksk07}).

\begin{thm}
Let $\rho^a$ be computed in some proposed liveness graph at $\q_a$. Let the
sequence of statements between $\q_b$ to $\q_a$ be $\pi'$ along an execution
path $\pi$.  Then, for $\TT(\pi',\pi, \rho^a)=\Sigma^b$, all access paths in
$\Sigma^b$ will be computed in some proposed liveness graph at $\q_b$.
\end{thm}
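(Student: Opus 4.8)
The plan is to prove the statement by induction on the number of statements in the sequence $\pi'$, mirroring the recursive definition of $\RR(\cdot)$ in Figure~\ref{fig:sound-def}, and to factor the argument into two pieces that correspond exactly to the two phases of the proposed analysis. Since $\TT(\pi',\pi,\rho^a) = \LnA_{\q_b}(\pi, \RR(\pi',\pi,\rho^a))$, the inner application of $\RR(\cdot)$ captures the backward transfer realised by Phase~1 (Equations~\ref{eq:prop-in-temp} and~\ref{eq:prop-out-temp}), while the outer application of $\LnA_{\q_b}$ captures the link-alias closure realised by Phase~2 (Equations~\ref{eq:prop-in} and~\ref{eq:prop-out}). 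I would therefore first establish a core lemma about $\RR(\cdot)$ and the Phase~1 value $\LL'$, and then lift it to the final value $\LL$ using the Phase~2 equations.

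The core lemma I would prove is: if $\rho^a \in \LL'_{\q_a}$ then $\RR(\pi',\pi,\rho^a) \subseteq \LL'_{\q_b}$. \emph{Base case:} $\pi'$ is a single statement $\s$, and here I would perform a case analysis matching each clause of $\RR(\cdot)$ against the corresponding row of the $\Gen_\s(\cdot)$/$\Kill_\s(\cdot)$ table (Figure~\ref{table:dfe}) and Equation~\ref{eq:prop-in-temp}. For instance, for $\s$ of the form $\X = \Y \texttt{->} \f$ the clause $\RR = \{\Y.\f.\sigma\}$ must coincide with $\Gen_\s$ applied to $\X.\sigma \in \LL'\Outn_\s$; for a kill-free use statement the clause $\RR = \{\rho\}$ reduces to preservation through $(\LL'\Outn_\s - \Kill_\s) \cup \Gen_\s$; and for $\X = \new$ or $\X = \Null$ the clause $\RR = \emptyset$ is matched by the kill set $\{\X.*\}$ removing exactly the paths rooted at $\X$. \emph{Inductive step:} for $\pi' = \pi_1;\pi_2$ with intermediate program point $\q_m$, the sequence clause gives $\RR(\pi',\pi,\rho^a) = \bigcup_{\rho' \in \RR(\pi_2,\pi,\rho^a)} \RR(\pi_1,\pi,\rho')$; applying the induction hypothesis first to $\pi_2$ (so each $\rho' \in \LL'_{\q_m}$) and then to $\pi_1$ for each such $\rho'$ (so $\RR(\pi_1,\pi,\rho') \subseteq \LL'_{\q_b}$), together with the monotone merge at control-flow joins in Equation~\ref{eq:prop-out-temp}, yields the claim.

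To finish the theorem I would apply $\LnA_{\q_b}$ to the inclusion $\RR(\pi',\pi,\rho^a) \subseteq \LL'_{\q_b}$ from the core lemma. Because Phase~2 sets $\LL_{\q_b} = \bigcup_{\rho \in \LL'_{\q_b}} \MayLnA_{\q_b}(\rho)$, and the precomputed may-link-alias over-approximates the concrete link-alias along any execution path, i.e. $\LnA_{\q_b}(\pi,\rho) \subseteq \MayLnA_{\q_b}(\rho)$, it follows that $\TT(\pi',\pi,\rho^a) = \LnA_{\q_b}(\pi,\RR(\pi',\pi,\rho^a)) \subseteq \LL_{\q_b}$. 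One remaining gap is that the hypothesis says $\rho^a$ is computed in ``some proposed liveness graph'', which may be the Phase~2 value $\LL_{\q_a}$ rather than $\LL'_{\q_a}$; when $\rho^a$ is a Phase~2 alias of some $\rho^{a'} \in \LL'_{\q_a}$, I would argue that $\RR(\cdot)$ carries link-aliased paths at $\q_a$ to link-aliased paths at $\q_b$, so the outer $\LnA_{\q_b}$ closure still absorbs them.

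The main obstacle is the $\X \texttt{->} \f = \Y$ case, where $\RR(\cdot)$ is defined in terms of the \emph{concrete} link-alias set $\LnA\Inn_{\pi'}(\pi,\X.\f)$ while Phase~1 computes $\Gen_\s$ and $\Kill_\s$ from the \emph{precomputed} $\MayLnA\Inn_\s(\X.\f)$ and $\MustLnA\Inn_\s(\X.\f)$. The two use aliasing in opposite directions: soundness of $\Kill_\s$ demands that must-alias \emph{under}-approximate the concrete aliases (so no genuinely live prefix is killed), whereas soundness of $\Gen_\s$ demands that may-alias \emph{over}-approximate them (so every concretely generated RHS path is produced). Reconciling these two inclusions, and arguing that the flow-sensitive but path-insensitive program-point alias information soundly covers the path-specific $\LnA(\pi,\cdot)$ for every execution path $\pi$ threading the relevant point, is the delicate part of the argument and is precisely where the ``minimal'' use of aliasing in Phase~1 must be shown not to lose any concretely live access path.
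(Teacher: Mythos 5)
Your proposal is correct in substance and follows the same skeleton as the paper's proof: structural induction on $\pi'$, base cases obtained by matching each clause of $\RR(\cdot)$ in Figure~\ref{fig:sound-def} against the $\Gen$/$\Kill$ entries of Figure~\ref{table:dfe}, the sequence clause $\pi_1;\pi_2$ as the inductive step, and aliasing discharged by the two approximation directions you name ($\MayLnA$ over-approximating and $\MustLnA$ under-approximating the concrete $\LnA{\white}(\pi,\cdot)$ --- a point the paper leaves entirely implicit). Where you genuinely differ is the decomposition. You prove a Phase-1-only core lemma (membership in ${\LL}'$ is preserved by $\RR$ backwards through $\pi'$) and apply the Phase-2 alias closure exactly once, at $\q_b$; this mirrors the definition $\TT(\pi',\pi,\rho)=\LnA{\white}_{\q_b}(\pi,\RR(\pi',\pi,\rho))$ faithfully, since $\RR$'s sequence clause composes with no intermediate closure. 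The paper instead runs the induction directly on $\TT$ and the final graphs: its inductive step forms $\Sigma^i=\TT(\pi_2,\pi,\rho^a)$ --- alias-closed at the intermediate point $\q_i$ --- and pushes every $\rho^i\in\Sigma^i$ through $\pi_1$ by a second appeal to the hypothesis. The paper's formulation proves something slightly stronger per step, but it lets the induction hypothesis apply verbatim to alias-closed sets, which is exactly what sidesteps the gap you flag: your theorem hypothesis only guarantees that $\rho^a$ lies in the Phase-2 value ${\LL}$ at $\q_a$, not in ${\LL}'$, so your core lemma does not apply directly, and your repair (that $\RR$ maps link-aliased paths at $\q_a$ to paths reaching the same concrete link at $\q_b$, which the outer closure then absorbs) is the right idea but rests on a semantic invariant of $\RR$ that you do not prove. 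To be fair, the paper's base cases are loose on the very same point --- they conclude from $\Sigma^b\subseteq\MayLnA\Inn_{\pi'}(\cdot)$ that $\Sigma^b$ lies in a proposed graph via Equation~\ref{eq:prop-in}, which tacitly assumes the pre-closure path is a Phase-1 value --- so your proposal, which at least makes this difficulty explicit, is no weaker than the published argument and is arguably the cleaner factoring.
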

\begin{proof}
The proof is by structural induction on $\pi'$. 
The inductive step corresponds to the last case in the definition of $\RR(\pi',\pi,\rho^a)$
and the based cases correspond to the rest of the cases in the definition of $\RR(\pi',\pi,\rho^a)$.
The base cases are:
\begin{enumerate}
\item 
\label{i1}
$\pi'$ is a use statement. In this case, $\Sigma^b$ is a subset of may-link
aliases of $\rho^a$ i.e., $\MayLnA\Inn_{{\pi'}}(\rho^a)$. 

\item 
\label{i2}
$\pi'$ is an assignment $\X=\dots$ such that $\X$ is not a prefix of $\rho^a$. Here
also $\Sigma^b$ is a subset of the may-link aliases of $\rho^a$ i.e.,
$\MayLnA\Inn_{{\pi'}}(\rho^a)$.

\item 
\label{i3}
$\pi'$ is an assignment $\X\texttt{->}\f=\dots$ such that $\nexists \rho$,
 $\rho \in \MustLnA\Inn_{\pi'}(\X.\f)$, where $\rho$ is a prefix of $\rho^a$. Here also $\Sigma^b$ is
a subset of the may-link aliases of $\rho^a$ i.e., $\MayLnA\Inn_{{\pi'}}(\rho^a)$.

\item 
\label{i4}
$\pi'$ is an assignment $\X=\Y$ such that $\rho^a=\X.\sigma$. In this case, $\Sigma^b$
is a subset of the may-link aliases of $\Y.\sigma$ i.e., $\MayLnA\Inn_{{\pi'}}(\Y.\sigma)$.

\item 
\label{i5}
$\pi'$ is an assignment $\X=\Y\texttt{->}\f$ such that $\rho^a=\X.\sigma$. In this case, $\Sigma^b$
is a subset of the may-link aliases of $\Y.\f.\sigma$ i.e., $\MayLnA\Inn_{{\pi'}}(\Y.\f.\sigma)$.

\item 
\label{i6}
 $\pi'$ is an assignment $\X\texttt{->}\f=\Y$ 
 such that $\forall \sigma, \rho^a=\rho.\sigma$, where $\rho \in
\MayLnA\Inn_{{\pi'}}(\X.\f)$. In this case, $\Sigma^b$ is a subset of the
may-link aliases of all $\Y.\sigma$ i.e., $\displaystyle \bigcup_{\forall
\sigma} \MayLnA\Inn_{{\pi'}}(\Y.\sigma)$.
\end{enumerate}

For points~\ref{i1},~\ref{i2}, and~\ref{i3} above, since $\rho^a$ is not in
$\Kill_{\pi'}(\cdot)$, access paths in $\Sigma^b$ are in some proposed liveness
graph at $\q_b$. For the remaining points, since 
$\Sigma^b=\L\Inn_{\pi'}$ (Equation~\ref{eq:prop-in}) represented with an equivalent
automaton\footnote{A set of access paths with unbounded lengths is summarized by merging fields with the
same set of use-sites. This creates a finite automaton, which is an
over-approximation of the set of computed access paths by the proposed technique
at the program point.}, we see that access paths in $\Sigma^b$ are in some proposed liveness
graph at $\q_b$.

For the inductive step, assume that the lemma holds for $\pi_1$ and $\pi_2$. 
From the definition of
$\TT$, there exists a set of access paths $\Sigma^i$ at the intermediate program point $\q_i$
between $\pi_1$ and $\pi_2$, such that $\Sigma^i=\TT(\pi_2,\pi,\rho^a)$. Now we need to transfer
all access paths $\rho^i \in \Sigma^i$ to program point $\q_b$ i.e., $\Sigma^b=
\displaystyle \bigcup_{\rho^i\in \Sigma^i} \TT(\pi_1,\pi,\rho^i)$.
Since $\rho^a$ is in some proposed liveness graph at $\q_a$, by the inductive hypothesis, access paths in
$\Sigma^i$ must be in some proposed liveness graph at $\q_i$. Further, by the induction hypothesis, 
access paths in $\Sigma^b$ must be in some proposed liveness graph at $\q_b$.
\end{proof}


{
\section{Program Characteristics for Precision Benefit}
\label{sec:repr}


{
\begin{figure}[t]
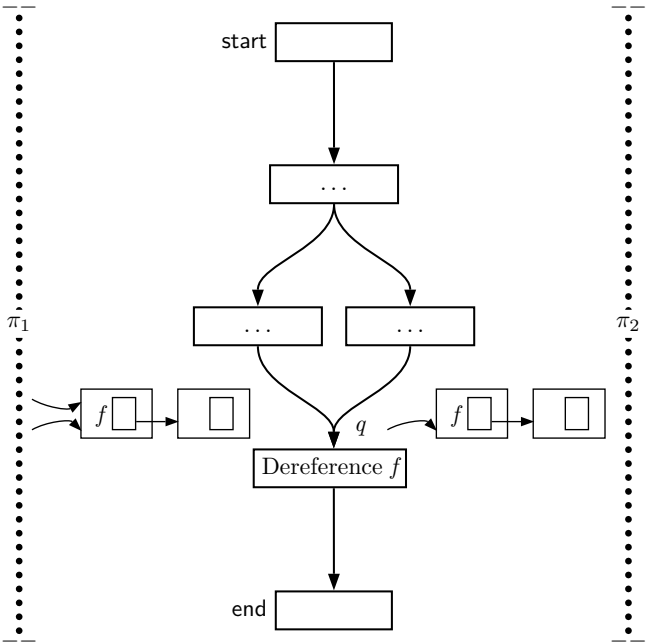

\centering
\repra
\caption{\small Representative feature \#1 in programs.}
\label{fig:repra}
\end{figure}
}

{
\begin{figure}
\footnotesize
{\begin{lstlisting}[numbers=none,lineskip={-1pt}]
struct node { 
    int data;
    struct node * next; 
};
\end{lstlisting}}
\begin{lstlisting}[lineskip={-1pt}]
main() {
    struct node *head, *tmp;
    list_alloc(tmp);
    switch(option) {
    case shallow_copy:
        head = tmp;
        break;
    case deep_copy:
        t=tmp;
        head=new;
        h=head; 
        while(t) {
            h->data=t->data;
            h->next=new;
            h=h->next;
            t=t->next;
        }
        break;
    }
    for ...
        head=head->next;
}
\end{lstlisting}
\caption{
{\small Representative feature \# 1. Shallow and deep copy of a linear linked list.  Locations
reachable by $\tmp.(\Next)^*$ are dead at the end of the deep copy and can
be garbage collected by proposed liveness analysis but not by existing. 
}}
\label{fig:algoa}
\end{figure}
}


Below are two representative features in SPEC CPU2006
benchmarks and common algorithms that show a precision benefit with the use of
proposed heap liveness analysis as compared to the use of existing heap
liveness analysis.
 
\begin{itemize}
\item {Representative feature \#1}
\begin{itemize}
\item Two different control flow
paths $\pi_1$ and $\pi_2$, both from the start to end of the program via a common
program point q. Field $\f$ is used at program point $\q$. Prefixes of field $\f$ at
$\q$ along control flow path $\pi_1$ are different from those at $\q$ along
$\pi_2$ (Figure~\ref{fig:repra}).

\item Figure~\ref{fig:link} is an example of such a program.

\item This representative feature can be found in
SVComp benchmarks and in SPEC CPU2006 benchmarks. We
found at least half a dozen such cases in h264ref; e.g., 
{\em terminate\_slice}(), {\em CheckAvailabilityOfNeighbours}(), {\em intrapred\_luma\_16x16}(),
{\em slice\_too\_big}().

\item Deep copy and shallow copy of a linked list (Figure~\ref{fig:algoa}) 
is such a common algorithm.

\end{itemize}

\item {Representative feature \#2}
\begin{itemize}
\item Two different control flow
paths $\pi_1$ and $\pi_2$, both from the start to end of the program via common
program points $\q$ and $\q'$. Field $\f$ is used at program
point $\q$ and is live at program point $\q'$. The prefixes and suffixes of
field $\f$ accessed in the memory graph after $\q'$ along $\pi_1$ are different
from those accessed in the memory graph after $\q'$ along $\pi_2$ (Figure~\ref{fig:reprb}).

\item Figure~\ref{fig:merge} is an example of such a program.

\item This representative feature is less common and harder to find. Functions
in h264ref are {\em terminate\_slice}() and {\em terminate\_macroblock}().

\item Creation of a new file in a two-level directory structure (Figure~\ref{fig:algob})
is such a common algorithm.

\end{itemize}
\end{itemize}


{
\begin{figure}[t]
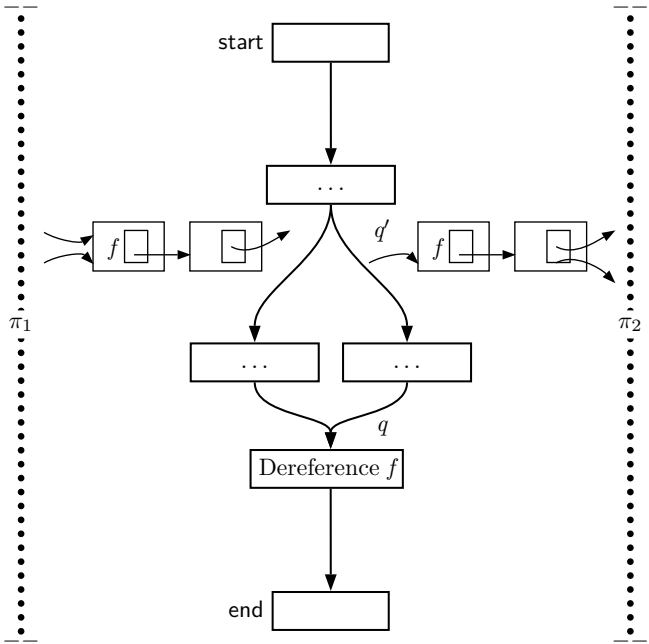

\centering
\reprb
\caption{\small Representative feature \#2 in programs.}
\label{fig:reprb}
\end{figure}
}

{
\begin{figure}
\footnotesize
{\begin{lstlisting}[numbers=none,lineskip={-1pt}]
struct user_t {
    int uid;
    struct file_t * fhead;
    struct user_t * unext; 
};
struct file_t {
    int fid;
    struct fdata_t * fdata;
    struct file_t * fnext; 
};
...
\end{lstlisting}}
\begin{lstlisting}[lineskip={-1pt}]
main() {
    ...
    create_file(uid, fid);
}
create_file(int uid, int fid){
    user_t * user = root;
    file_t * file = user->fhead;
    if(uid is allowed) {
        while(user->uid != uid)
            user = user->unext;
        file = user->fhead;
        while(file->fnext->fid != fid)
            file = file->fnext;
        file->fnext->fdata = new;
    }
    use(file->fnext->fdata);
}
\end{lstlisting}
\caption{
{\small Representative feature \# 2. An operation on two-level directory structure.  Locations
reachable by $\user.(\unext)^+.\fhead.(\fnext)^*.\data$ are dead at exit of
statement 3 and can be garbage collected by proposed liveness analysis
but not by existing.  
}}
\label{fig:algob}
\end{figure}
}

}

\end{document}